\documentclass[hidelinks,onefignum,onetabnum]{siamart251216}

\usepackage{amssymb}
\usepackage{mathtools}
\usepackage{bbm}
\usepackage{algorithmic}
\usepackage{booktabs}
\graphicspath{{figs/}}
\hypersetup{
    colorlinks=true,
    linkcolor=blue,
    anchorcolor=blue,
    citecolor=blue,
    urlcolor=blue
}

\newsiamremark{remark}{Remark}
\crefname{remark}{Remark}{Remarks}
\Crefname{remark}{Remark}{Remarks}
\usepackage{etoolbox}
\AtBeginEnvironment{remark}{\crefalias{theorem}{remark}}

\newcommand{\R}{\mathbb{R}}
\newcommand{\C}{\mathbb{C}}

\newcommand{\Id}{\mathbbm{1}}                 
\newcommand{\qhat}{\hat{q}}                   
\newcommand{\phat}{\hat{p}}                   
\newcommand{\Order}{\mathcal{O}}
\newcommand{\Htot}{H_{\mathrm{total}}}
\newcommand{\tr}{\operatorname{tr}}

\title{A Provable Oracle-Free Quantum Algorithm for Nonlinear Dynamics on
       Hybrid Oscillator--Qubit Processors
       \thanks{%
         Submitted to the editors \today.}}

\author{%
    Kausthubh Chandramouli
    \thanks{%
    Department of Electrical and Computer Engineering, North Carolina State University, Raleigh, NC 27606, USA
    (\email{kprabha@ncsu.edu}, \email{yliu335@ncsu.edu}).}%
    \and
    Yan Li
    \thanks{%
    Department of Electrical Engineering, Pennsylvania State University, University Park, PA 16802, USA
    (\email{yql5925@psu.edu}).}%
    \and
    Yuan Liu\footnotemark[2]
    \thanks{Department of Computer Science, North Carolina State University, Raleigh, North Carolina 27695, USA.
    Also affiliated with: Department of Physics and Astronomy, North Carolina State University, Raleigh, North Carolina 27695, USA}
}

\headers{Provable Oracle-Free Quantum Algorithm for Nonlinear Dynamics}{K.\ Chandramouli, Y. Li, and Y. Liu}

\ifpdf
\hypersetup{
  pdftitle={A Provable, Oracle-free Quantum Algorithm for Nonlinear Dynamics
            on Hybrid Oscillator-Qubit Processors},
  pdfauthor={K. Chandramouli}
}
\fi

\begin{document}
\maketitle

\begin{abstract}
  We develop a hybrid qubit--qumode algorithm for nonlinear ordinary
  differential equations $\dot{\mathbf{x}}=\mathbf{f}(\mathbf{x})$ with a
  drift of polynomial degree~$L$.  The algorithm propagates the
  Fokker--Planck density rather than the state, returning the trajectory as
  the density peak in the small-noise limit together with the low-order
  statistics of the flow.  Schr\"{o}dingerisation carries the discretised
  generator into a parametrised family of Schr\"{o}dinger equations, whose
  Fourier-mode parameter we place on a single continuous-variable qumode, so
  the whole continuum evolves under one coherent circuit.  A tridiagonal generator is known to admit a
  bipartite Pauli decomposition, $\{I,Z\}$ on a prefix register and
  $\{X,Y\}$ on a suffix register, sorting its strings into $\Order(\log N)$
  commuting families [Arseniev et al., 2024].  Our contribution is what a
  polynomial drift adds to that structure: each family factorises into a
  diagonal of degree at most $L$ tensored with a fixed rank-two bond
  operator, and carries only $\Order(2^{m}n^{L})$ non-zero strings at carry
  level $m$.  The total string count is $\Theta(N)$; the saving comes from
  the factorisation, which makes every family exponential an exact product
  of $\Order(n^{L})$ monomial-controlled momentum displacements, with no
  intra-family Trotter error.  The circuit costs $\Order(d^{L+1}n^{L+2})$ gates per Trotter step
  on a $d$-dimensional grid of $N=2^{n}$ points per axis, polylogarithmic in
  the $N^{d}$ grid points, and invokes no sparse-access oracle or block
  encoding: every gate is fixed in closed form by the drift coefficients.
  We also bound the numerical abscissa $\lambda_{\max}(H_{1})$ uniformly in
  the grid spacing, which fixes the recovery domain of the warped-phase
  transform and the post-selection cost, and we validate the whole pipeline
  classically on two nonlinear benchmarks.
\end{abstract}

\begin{keywords}
  quantum simulation, nonlinear differential equations, Fokker--Planck
  equation, Schr\"{o}dingerisation, qumode, linear combination of
  unitaries, Pauli decomposition, Trotterisation
\end{keywords}

\begin{MSCcodes}
  81P68, 65L05, 35Q84
\end{MSCcodes}

\section{Introduction}\label{sec:intro}

Many problems in computational science, among them chemical kinetics, fluid
dynamics, plasma physics, power systems, and neural dynamics, are governed by
nonlinear ordinary differential equations
\begin{equation}
  \dot{\mathbf{x}} \;=\; \mathbf{f}(\mathbf{x}),\qquad
  \mathbf{x}\in\R^{d},
  \label{eq:ode_intro}
\end{equation}
whose flow is the object we compute: the trajectory issuing from a given
initial state, and the low-order statistics of the bundle issuing from a
distribution of them.  We reach it, as Tennie and Magri
do~\cite{TennieMagri2024,TennieMagri2025}, by propagating the state
\emph{density} rather than the state.  A small stochastic forcing makes the
density obey the Fokker--Planck equation, linear in the density however
nonlinear $\mathbf{f}$ is in $\mathbf{x}$, and the deterministic trajectory
is returned as its peak in the small-noise limit.  The price is dimension: a
grid density on $\R^{d}$ carries $\Order(N^{d})$ degrees of freedom, so
removing the nonlinearity buys an exponentially large state space in
exchange.  Quantum computing is the natural response, that state space being
addressed by $d\log_{2}N$ qubits, with near-optimal gate complexities for
Hamiltonian simulation of sparse linear
systems~\cite{Berry2015,Low2019,Childs2021}; efficient quantum algorithms
are, however, available primarily for \emph{linear} systems, and the
Fokker--Planck embedding supplies the linear form they require.

The construction has four stages.  \emph{(i)~Fokker--Planck linearisation:}
forcing \eqref{eq:ode_intro} with a small Wiener term, the density of the
perturbed state evolves linearly~\cite{Risken1989,TennieMagri2024}.
\emph{(ii)~Spatial discretisation:} central differences on a uniform grid
give $\dot{\boldsymbol{\rho}}=A\boldsymbol{\rho}$ with $A$ sparse,
tridiagonal, and non-Hermitian.  \emph{(iii)~Schr\"{o}dingerisation:} the
warped phase transformation~\cite{JinLiu2022,JinLiu2024} maps $e^{At}$ to
the family $i\,\partial_{t}\tilde{w}=(2\pi\eta H_{1}+H_{2})\tilde{w}$,
$\eta\in\R$, with $H_{1}=(A+A^{\dagger})/2$ and
$H_{2}=(A^{\dagger}-A)/(2i)$.  \emph{(iv)~Continuous-variable LCU:} rather
than discretise $\eta$ and simulate each mode, we couple the qubit register
carrying $\boldsymbol{\rho}$ to a \emph{qumode} whose position quadrature
$\qhat$ plays the role of $\eta$; initialised in the Lorentzian state and
evolved under
\begin{equation}
  \Htot \;=\; H_{1}\otimes\qhat + H_{2}\otimes\Id_{\mathrm{qumode}},
  \label{eq:Htot_intro}
\end{equation}
it realises the whole $\eta$-continuum in one coherent circuit.  Stages~(i)
and~(ii), and the semi-stability underlying~(iii), are due to Tennie and
Magri~\cite{TennieMagri2024,TennieMagri2025}, who price the mode continuum on
a Fourier register but leave open the circuit: how the exponentials
$e^{-i\mathcal{H}(\eta)\Delta t}$ are compiled, at what gate cost, and from
what input data.  This paper answers those three questions.

Two things make that possible.  The first is structural and is not ours:
Arseniev et al.~\cite{Arseniev2024} show that a tridiagonal matrix on
$N=2^{n}$ points has Pauli support confined to $n+1$ bipartite sets
$\{I,Z\}^{\otimes(n-m)}\otimes\{X,Y\}^{\otimes m}$, and that a real
tridiagonal matrix therefore populates at most $2n+1$ internally commuting
families.  We take that as given and ask what a \emph{polynomial} drift adds
to it, which is the second: each family then factorises into a diagonal of
degree at most $L$ tensored with a fixed rank-two bond operator, carries only
$\Order(2^{m}n^{L})$ non-zero strings at carry level $m$, and has its
exponential realised \emph{exactly}, with no intra-family Trotter error, as a
product of monomial-controlled qumode displacements.  The circuit costs
$\Order(d^{L+1}n^{L+2})$ gates per Trotter step, polynomial in the dimension
and polylogarithmic in the $N^{d}$ grid points.  We also bound the numerical
abscissa $\lambda_{\max}(H_{1})$ uniformly in the grid spacing, which fixes
the shifted half-line from which the density is recovered and the
post-selection cost it carries.

We call the algorithm \emph{oracle-free} in a definite sense.  Quantum
differential-equation solvers built on quantum linear-systems
routines~\cite{Harrow2009,Childs2021,Krovi2023} are normally stated relative
to a black-box oracle returning the non-zero entries of the coefficient
matrix, or to a block encoding of it, with the cost of realising that
primitive left outside the analysis.  No such primitive appears here: the
entire input to the compiler is the coefficients of the degree-$L$ drift,
the grid parameters, and the diffusion, from which a closed-form
preprocessing step fixes every gate.  At no point is $A$ assembled, queried,
or accessed through an intermediary, so every count we quote is a count of
Cliffords, phase rotations, and controlled qumode displacements, the last
counted as elementary and native on the platforms concerned.  The
qumode-coupled evolution does realise $e^{At}$ on a subspace of the enlarged
space, so the oscillator may be read as a continuous-variable block encoding
of the propagator; the distinction we draw is that it is \emph{constructed
and priced} rather than \emph{assumed and queried}.  Two inputs remain
outside this accounting: preparation of the amplitude-encoded
$|\boldsymbol{\rho}(0)\rangle$, efficient for approximate Gaussians and
polynomial functions~\cite{XieBenAmi2025,GonzalezConde2024}, and preparation
of the qumode kernel state, which we do cost, in Fock cutoff.

\Cref{sec:related} places the algorithm in the literature.
\Cref{sec:setup,sec:fokkerplanck,sec:disc} set up the problem, the
linearisation, and the discretisation with its split $A=H_{1}-iH_{2}$, and
\cref{sec:schrod} the Schr\"{o}dingerisation and the abscissa bound.
\Cref{sec:pauli} establishes the structural results,
\cref{sec:implementation} the circuit and the algorithm,
\cref{sec:complexity} the resource analysis, and \cref{sec:numerics} the
numerical validation; \cref{sec:discussion,sec:conclusion} discuss
extensions and conclude.  Two appendices prove the abscissa bound and the
longer structural results.

\section{Related work}\label{sec:related}

Quantum algorithms for differential equations differ along two axes: how a
nonlinear system is reduced to a linear one, and how the resulting linear,
generically non-unitary, evolution is realised by unitary gates.

On the first axis, the prevailing route is \emph{Carleman linearisation},
lifting a polynomial ODE into a truncated moment hierarchy, with rigorous
bounds under a dissipativity condition~\cite{Liu2021} and in a
non-dissipative resonant regime~\cite{WuWangLi2025}; its dimension grows
combinatorially with the truncation order and its convergence is governed by
the ratio of nonlinear to linear forcing.  \emph{Koopman--von Neumann} is a
third route, used on continuous-variable hardware by Gan et
al.~\cite{GanEtAl2025}, who carry the field in multimode coherent-state
amplitudes and realise the lifted generator as a completely positive map,
under a positive semi-definite generator and post-selection at every step.
The \emph{Fokker--Planck}
route~\cite{TennieMagri2024,TennieMagri2025,Risken1989} adopted here instead
lifts the dynamics into one linear transport--diffusion PDE for the density,
exact for any smooth drift and indifferent to the balance of nonlinear and
linear forcing, at the cost of a discretised density on $\R^{d}$ and a
diffusion term whose norm grows with resolution.  It returns a distribution
rather than the trajectory alone, which costs more when only the trajectory
is wanted and supplies more when the statistics are the object of interest.

On the second axis, given $\dot{\boldsymbol{\rho}}=A\boldsymbol{\rho}$ the
quantum linear-systems and Hamiltonian-simulation
backends~\cite{Harrow2009,Childs2021,Krovi2023,Chakraborty2024} apply, and
three unitary embeddings of a non-unitary $A$ are standard:
\emph{Schr\"{o}dingerisation}~\cite{JinLiu2022,JinLiu2024,JinLiuYu2023},
which we use and which requires semi-stability, guaranteed here by the
Markov structure of the discretisation;
\emph{LCHS}~\cite{AnLiuLin2023,AnChildsLin2023}; and
\emph{time-marching}~\cite{FangLinTong2023}.  Li~\cite{Li2026Dilation} shows
these to be members of one moment-matching dilation family whose
tight-binding representative places the ancilla on a one-dimensional
lattice, a framework that classifies our choice rather than competing with
it, our qumode being the continuum limit of that lattice.  Closely related
are dilations of the Lindblad equation~\cite{DingLiLin2024}, and the
fast-forwarding limitations of~\cite{AnLiuWangZhao2025}, which govern our
post-selection cost.  Schr\"{o}dingerisation and LCHS both sweep a continuum
in one scalar parameter; our departure is to carry it on a physical mode.
Since our structural results concern only the split $A=H_{1}-iH_{2}$, any
dilation coupling an ancilla to that split inherits them.  The stronger
condition $H_{1}\preceq0$, under which recovery may use the whole half-line
$\xi>0$, fails for a non-constant drift; Jin, Liu, and
Ma~\cite{JinLiuMa2025,JinLiuMaIllPosed2025} show that a system carrying
unstable modes is nonetheless recovered from suitable data in the extended
space, and we identify that data here and bound
it uniformly in the grid.

Nearest to this work are the continuous-variable realisations.  Jin and
Liu~\cite{JinLiuAnalogPDE2024,JinLiuJC2024,JinLiuCV2025,JinLiuYu2023} map a
$D$-dimensional linear PDE onto $D{+}1$ qumodes with \emph{no} spatial
discretisation, using couplings available in cavity and circuit QED, and Das
et al.~\cite{DasEtAl2026} price hybrid oscillator--qubit kernel states for
the analogous LCHS representation, whose estimates we adopt.  We differ from
the analog programme in which degrees of freedom are carried by the
continuum: there the $D$ spatial variables and the Schr\"{o}dingerisation
variable all are, here only the latter, the density staying on an $n$-qubit
register.  Keeping that register digital is what exposes the Pauli structure
we exploit, a statement about a finite-difference stencil on
$2^{n}$ points with no counterpart when $x$ is encoded in the quadratures;
it also weakens the requirement that $\mathbf{f}$ be realisable as a continuous-variable
Hamiltonian to evaluability on the grid, at the price of discretisation
error and a per-step depth polynomial in $\log N$.

For the Fokker--Planck equation specifically, Jin, Liu, and
Yu~\cite{JinLiuYuFP2024} give a detailed Schr\"{o}dingerisation algorithm,
with circuits built by time splitting from shift operators diagonalised in
the Fourier and Bell bases, and~\cite{HuJinLiuZhang2024} gives
qubit-register realisations with explicit gate counts; Kharazi et
al.~\cite{Kharazi2026} prove speedups for reaction rates by a Gaussian-LCHS
representation on a block encoding, paired with a generalised Hadamard test.
Both target settings narrower than ours in one respect and are complementary
in another.  Their drifts are potentials ($\nabla V$) or reversible
gradient-drifts with a Boltzmann equilibrium, which makes the
generator symmetrisable, and their boundaries are periodic, whereas ours is a
general polynomial $\mathbf{f}$, which for $d\ge2$ need not be a gradient, on
a conservative reflecting wall dictated by the problem, \cref{eq:ode_intro}
being posed on $\R^{d}$ with trajectories that do not wrap.  The wall
separates the constructions twice: it makes the shifts non-circulant,
removing the fixed diagonalising basis periodicity supplies, and it puts two
non-interior rows into $H_{1}$ whose Gershgorin discs reach
$\Order(1/\Delta x)$, so a grid-independent abscissa requires
\cref{thm:abscissa}.  Where~\cite{Kharazi2026} computes a scalar rate
from a block encoding whose own cost is left outside the accounting, we
return the full density from a circuit compiled directly from the drift
coefficients.

The tool that makes that compilation possible is the tridiagonal Pauli
decomposition of Arseniev et al.~\cite{Arseniev2024}, which supplies the
bipartite structure and the commuting-family count we use; our
contribution is what a polynomial drift adds to it.  Related encodings of
finite-difference operators, under periodic, Dirichlet, Neumann, and Robin
conditions~\cite{Kharazi2025} and with physical boundary and interface
conditions within Schr\"{o}dingerisation~\cite{JinLiLiuYu2024}, are
general-purpose where ours is specific to this stencil, and our sparsity
bound also sharpens generic Pauli-partitioning
heuristics~\cite{Crawford2021,Reggio2024} for this operator class.

\section{Problem setup}\label{sec:setup}

We consider an autonomous nonlinear ODE
\begin{equation}
  \dot{\mathbf{x}}(t) \;=\; \mathbf{f}(\mathbf{x}(t)),\qquad
  \mathbf{x}(t)\in\R^{d},\quad
  \mathbf{x}(0)=\mathbf{x}_{0},
  \label{eq:ode}
\end{equation}
where the drift $\mathbf{f}\colon\R^{d}\to\R^{d}$ is smooth.  Throughout
the paper we make the following \emph{polynomial-smoothness} assumption:
each component $f_{i}$ can be uniformly approximated, on the spatial
domain of interest, by a polynomial of total degree at most~$L$
\emph{jointly} in the $d$ coordinates $x_{1},\dots,x_{d}$ (not merely degree $L$ per axis).

\section{Fokker--Planck linearisation}\label{sec:fokkerplanck}

We embed \cref{eq:ode} into a stochastic framework by appending a small
isotropic Wiener forcing,
\begin{equation}
  d\mathbf{x}
  \;=\; \mathbf{f}(\mathbf{x})\,dt
   \;+\;\sqrt{2\sigma}\,d\mathbf{W}_{t},\qquad \sigma>0,
  \label{eq:sde}
\end{equation}
with $\mathbf{W}_{t}$ a standard $d$-dimensional Wiener process.  In the
zero-noise limit $\sigma\to0$ the trajectories of \cref{eq:sde} converge
to those of \cref{eq:ode}, so the embedding is a regularisation rather
than an approximation~\cite{Risken1989}.

The PDF $\rho(\mathbf{x},t)$ of the diffusion \cref{eq:sde} satisfies the
\emph{Fokker--Planck equation},
\begin{equation}
  \partial_{t}\rho
   \;=\; -\sum_{i=1}^{d}\partial_{x_{i}}\!\bigl[f_{i}(\mathbf{x})\,\rho\bigr]
   \;+\;\sigma\,\Delta_{\mathbf{x}}\rho
   \;\equiv\; A_{\mathrm{FP}}\,\rho,
  \label{eq:FP}
\end{equation}
which follows from It\^{o}'s lemma and the Kramers--Moyal expansion (the
higher Kramers--Moyal coefficients vanish by Pawula's
theorem~\cite{Pawula1967,Risken1989}).  Although the drift
$\mathbf{f}$ is nonlinear in $\mathbf{x}$, equation \cref{eq:FP} is
\emph{linear} in the unknown $\rho$.  The operator $A_{\mathrm{FP}}$ is
non-Hermitian: the diffusion is self-adjoint and negative semi-definite,
while the drift term $-\nabla\!\cdot\!(\mathbf{f}\,\cdot)$ decomposes
into a skew-adjoint advection and the symmetric multiplication operator
$-\tfrac12(\nabla\!\cdot\!\mathbf{f})$. The linearity of $A_{\mathrm{FP}}$ with respect to $\rho$ makes it a
suitable target for a linear-systems quantum algorithm.

\begin{remark}\label{rem:recovery}
  The deterministic trajectory of \cref{eq:ode} is recovered
  as the peak(s) of $\rho$ in the limit $\sigma\to0$;
  physical observables of interest are typically linear or low-order polynomial
  functionals of $\rho$.
\end{remark}

\section{Spatial discretisation and Hermitian decomposition}\label{sec:disc}\label{sec:cd}\label{sec:multidim}\label{sec:herm}

We present $d=1$; the extension to $d$ dimensions is by tensor product
below.  On a uniform grid $x_{j}=x_{\min}+j\,\Delta x$, $j=0,\dots,N-1$ with
$\Delta x=(x_{\max}-x_{\min})/(N-1)$, writing
$\rho_{j}(t)\approx\rho(x_{j},t)$, $f_{j}=f(x_{j})$ and
$\boldsymbol{\rho}=(\rho_{0},\dots,\rho_{N-1})^{T}$, second-order central
differences turn \cref{eq:FP} into
$\dot{\boldsymbol{\rho}}=A\boldsymbol{\rho}$ with $A\in\R^{N\times N}$ real,
generically non-symmetric, and tridiagonal:
\begin{align}
  A_{j,j-1} &= \frac{f_{j-1}}{2\Delta x}+\frac{\sigma}{\Delta x^{2}}, &
  A_{j,j}   &= -\frac{2\sigma}{\Delta x^{2}}, &
  A_{j,j+1} &= -\frac{f_{j+1}}{2\Delta x}+\frac{\sigma}{\Delta x^{2}}.
  \label{eq:A_entries}
\end{align}
The CFL-type condition $\Delta x\le\min_{j}2\sigma/|f_{j}|$ makes the
off-diagonals non-negative and $e^{At}$ a stochastic
semigroup~\cite{Holubec2019}.  Reflecting boundaries zero the entries
reaching outside the domain, which is not exactly conservative, the two
boundary columns then failing to sum to zero; we restore
$\mathbf{1}^{T}A=0$ by the \emph{master-equation form}, setting each boundary
diagonal to minus the sum of its retained outgoing rates.  This leaves the
interior stencil unchanged and is a rank-two correction,
\begin{equation}
  A \;=\; A_{\mathrm{bulk}} \;+\; \Delta A,\qquad
  \Delta A \;=\; \delta_{0}\,|0\rangle\langle 0|
                 \;+\;\delta_{N-1}\,|N\!-\!1\rangle\langle N\!-\!1|,
  \label{eq:A_split}
\end{equation}
with $\delta_{0}=\sigma/\Delta x^{2}-f_{0}/(2\Delta x)$ and
$\delta_{N-1}=\sigma/\Delta x^{2}+f_{N-1}/(2\Delta x)$.  The structural analysis below is
carried out for $A_{\mathrm{bulk}}$; $\Delta A$ is treated in
\cref{rem:boundary}.

\begin{remark}[Periodic boundaries]\label{rem:periodic}
  Nothing requires the reflecting wall, and periodicity is the simpler case.
  The wrap-around entries are completed from \cref{eq:A_entries} with
  $f_{N}\equiv f_{0}$; conservation is then automatic, every cyclic column
  already summing to zero, so $\Delta A$ vanishes.  The wrap bond joins
  $|0\cdots0\rangle$ to $|1\cdots1\rangle$, which differ in every bit, so it
  enters the existing carry level $m=n$ and populates no new family, at one
  additional controlled displacement per Trotter step.  We therefore carry
  the reflecting case throughout.
\end{remark}

\paragraph{Multi-dimensional extension}

For $d$ dimensions with $N_{k}=2^{n_{k}}$ points along axis $k$, so
$N_{\mathrm{tot}}=\prod_{k}N_{k}=2^{n_{\mathrm{tot}}}$ on
$n_{\mathrm{tot}}=\sum_{k}n_{k}$ qubits, the discretised operator splits into
one term per axis, each a \emph{transverse-projector-weighted} sum of
one-dimensional tridiagonal blocks, one per grid line parallel to that axis
(\cite{TennieMagri2025}, eq.~(3.7)):
\begin{equation}
  A \;=\; \sum_{k=1}^{d} A^{(k)},\qquad
  A^{(k)} \;=\; \sum_{\mathbf{p}_{\perp}}
    E^{(N_{1})}_{p_{1}p_{1}} \otimes\cdots\otimes
    \underbrace{A_{k}(\mathbf{p}_{\perp})}_{\text{slot }k}
    \otimes\cdots\otimes E^{(N_{d})}_{p_{d}p_{d}},
  \label{eq:A_decomp}
\end{equation}
where $\mathbf{p}_{\perp}$ runs over the transverse multi-index,
$E^{(N_{i})}_{pp}$ projects onto index $p$, and $A_{k}(\mathbf{p}_{\perp})$
is the stencil \cref{eq:A_entries} for axis $k$ assembled from $f_{k}$ on the
grid line whose transverse coordinates are frozen at $\mathbf{p}_{\perp}$.
Since those coefficients may depend on all $d$ coordinates the decomposition
covers fully coupled drifts, collapsing to a bare tensor product only in the
separable case $f_{k}(\mathbf{x})=f_{k}(x_{k})$.  The transverse factors are
computational-basis projectors, hence diagonal, and this is what preserves
the bipartite Pauli structure in $d$ dimensions (\cref{rem:multidim_pauli}).

\paragraph{Hermitian decomposition}
Any real $A$ splits uniquely as
\begin{equation}
  A \;=\; H_{1} - i H_{2},\qquad
  H_{1}=\tfrac{A+A^{T}}{2},\qquad
  H_{2}=\tfrac{A^{T}-A}{2i},
  \label{eq:A_decomp_herm}
\end{equation}
with $H_{1}$ symmetric, carrying the diffusion and a symmetrised drift
correction, and $H_{2}$ skew, carrying the advection.  From
\cref{eq:A_entries}, $H_{1}$ is real symmetric tridiagonal with constant
diagonal $-2\sigma/\Delta x^{2}$ and off-diagonals
\begin{equation}
  \alpha_{j}\;:=\;\frac{\sigma}{\Delta x^{2}}+\frac{f_{j}-f_{j+1}}{4\Delta x}
  \;\in\;\R,
  \label{eq:alpha_def}
\end{equation}
while $H_{2}$ is Hermitian tridiagonal with vanishing diagonal and
off-diagonals $(H_{2})_{j,j+1}=-i\beta_{j}=\overline{(H_{2})_{j+1,j}}$,
\begin{equation}
  \beta_{j}\;:=\;\frac{f_{j}+f_{j+1}}{4\Delta x}\;\in\;\R,
  \label{eq:beta_def}
\end{equation}
the advective rate on the bond, of either sign.  Introducing for each bond
$j\to j+1$
\begin{equation}
  S_{j}^{+} := |j\rangle\langle j+1| + |j+1\rangle\langle j|,\qquad
  S_{j}^{-} := i\bigl(|j\rangle\langle j+1|-|j+1\rangle\langle j|\bigr),
  \label{eq:Spm}
\end{equation}
we may write compactly
\begin{equation}
  H_{1} \;=\; -\frac{2\sigma}{\Delta x^{2}}\,\Id
              \;+\;\sum_{j=0}^{N-2}\alpha_{j}\,S_{j}^{+},\qquad
  H_{2} \;=\; -\sum_{j=0}^{N-2}\beta_{j}\,S_{j}^{-}.
  \label{eq:H1H2_bond}
\end{equation}
The bond operators \cref{eq:Spm} are the building blocks of the Pauli
analysis that follows.

\section{Schr\"{o}dingerisation}\label{sec:schrod}\label{sec:recovery_domain}

Schr\"{o}dingerisation~\cite{JinLiu2022,JinLiu2024,JinLiuYu2023} embeds
$e^{At}$ into a unitary flow on one added dimension.  With $\xi\in\R$ and the
warped-phase variable $w(t,\xi):=e^{-\xi}\boldsymbol{\rho}(t)$ for $\xi>0$,
extended by $w(0,\xi)=e^{-|\xi|}\boldsymbol{\rho}(0)$, the identity
$\partial_{\xi}w=-w$ on $\xi>0$ turns $\partial_{t}w=Aw$ into
\begin{equation}
  \partial_{t}w \;=\; -H_{1}\,\partial_{\xi}w \;-\; i\,H_{2}\,w ,
  \label{eq:warped_pde}
\end{equation}
and the Fourier transform
$\tilde{w}(t,\eta)=\int_{\R}e^{-2\pi i\eta\xi}w\,d\xi$, sending
$-\partial_{\xi}\mapsto-2\pi i\eta$, gives a parametrised family of
Schr\"{o}dinger equations
\begin{equation}
  i\,\partial_{t}\tilde{w}
  \;=\;\mathcal{H}(\eta)\,\tilde{w},\qquad
  \mathcal{H}(\eta) \;:=\; 2\pi\eta\,H_{1}+H_{2},\qquad
  \eta\in\R,
  \label{eq:schrodinger_family}
\end{equation}
each generator Hermitian, so
$\tilde{w}(t,\eta)=e^{-i\mathcal{H}(\eta)t}\tilde{w}(0,\eta)$.  The Fourier
transform of $e^{-|\xi|}$ is the Lorentzian
\begin{equation}
  \widehat{e^{-|\cdot|}}(\eta)
  \;=\;\frac{2}{1+4\pi^{2}\eta^{2}},
  \label{eq:fourier_initial}
\end{equation}
so $\tilde{w}(0,\eta)=2(1+4\pi^{2}\eta^{2})^{-1}\boldsymbol{\rho}(0)$; this
is the qumode wavefunction used below.

\paragraph{Semi-stability and the recovery domain}
Two spectral quantities must be distinguished: the \emph{spectral abscissa}
$\alpha(A)=\max_{i}\operatorname{Re}\lambda_{i}(A)$, governing
semi-stability, and the \emph{numerical abscissa}
$\mu(A)=\lambda_{\max}(H_{1})$, the logarithmic norm, governing
$\|e^{At}\|_{2}\le e^{\mu(A)t}$; they coincide only for normal $A$, and this
generator is non-normal~\cite{TennieMagri2025}.  The positivity condition
and \cref{eq:A_split} make $A$ a Markov generator with $\mathbf{1}^{T}A=0$,
so $\alpha(A)\le0$ and Schr\"{o}dingerisation
applies~\cite{TennieMagri2025,JinLiuYuFP2024}.  The numerical abscissa,
however, is generically positive: in the continuum
\begin{equation}
  H_{1}
  \;=\;\tfrac12\bigl(A_{\mathrm{FP}}+A_{\mathrm{FP}}^{\dagger}\bigr)
  \;=\;\sigma\Delta \;-\;\tfrac12(\nabla\!\cdot\!\mathbf{f}),
  \label{eq:H1_continuum}
\end{equation}
a Schr\"{o}dinger-type operator whose potential is positive wherever the
flow compresses.  This is handled by shifting the region from which the
original variable is recovered, as Jin, Liu, and Ma~\cite{JinLiuMa2025}
establish for Schr\"{o}dingerised systems carrying unstable modes.  For the
present generator the shift is set by $\lambda_{\max}(H_{1})$, which the
following bounds uniformly in the grid; Gershgorin does not suffice, since
the two wall rows of the conservative discretisation have discs reaching
$\Order(1/\Delta x)$ under an outward drift, while the boundary layer that
produces the growth has width set by $\sigma$ and the drift, not by
$\Delta x$.

\begin{theorem}[Grid-independent numerical abscissa]\label{thm:abscissa}
  Let $f\in C^{1}([x_{\min},x_{\max}])$, and set
  $\Lambda:=\max_{x}|f'(x)|$, $F:=\max_{x}|f(x)|$,
  $M:=F_{\partial}/(2\Delta x)+\Lambda/4$ and
  \begin{equation}
    F_{\partial}\;:=\;
      \max\bigl(\,[-f(x_{\min})]_{+},\;[\,f(x_{\max})]_{+}\bigr)
    \label{eq:Fbdry}
  \end{equation}
  for the outward drift speed at the walls.  If
  \begin{equation}
    \Delta x\;\le\;\frac{2\sigma}{F},\qquad
    \Delta x^{2}\;\le\;\frac{2\sigma}{\Lambda},\qquad
    4M\,\Delta x^{2}\;\le\;\sigma,
    \label{eq:abscissa_hyp}
  \end{equation}
  then the symmetric part of the master-equation generator satisfies
  \begin{equation}
    \lambda_{\max}(H_{1})
    \;\le\;\tfrac12\max_{x}\bigl(-f'(x)\bigr)
      \;+\;\frac{32\,M^{2}\Delta x^{2}}{\sigma}
      \;+\;\Order(\Delta x^{2}).
    \label{eq:abscissa}
  \end{equation}
  Since $M\Delta x=F_{\partial}/2+\Lambda\Delta x/4$, the second term tends
  to $8F_{\partial}^{2}/\sigma$ as $\Delta x\to0$, so $\lambda_{\max}(H_{1})$
  is bounded uniformly in $\Delta x$ by a constant fixed by the drift and the
  diffusion.
\end{theorem}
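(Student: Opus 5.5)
We would bound $\lambda_{\max}(H_{1})$ through the Rayleigh quotient, working directly with the quadratic form $Q(v)=v^{T}H_{1}v$ of the full master-equation generator, including the rank-two correction $\Delta A$ of \cref{eq:A_split}. Take $v\in\R^{N}$ with $\|v\|=1$ and use \cref{eq:H1H2_bond}:
\[
Q(v)=-\frac{2\sigma}{\Delta x^{2}}\sum_{j}v_{j}^{2}+\sum_{j=0}^{N-2}2\alpha_{j}v_{j}v_{j+1}+\delta_{0}v_{0}^{2}+\delta_{N-1}v_{N-1}^{2}.
\]
The key algebraic step is the identity $2v_{j}v_{j+1}=v_{j}^{2}+v_{j+1}^{2}-(v_{j+1}-v_{j})^{2}$ applied bond by bond. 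For the diffusive part $\sigma/\Delta x^{2}$ of $\alpha_{j}$, every interior site appears in two bonds and each endpoint in one. This produces $-\frac{\sigma}{\Delta x^{2}}\sum_{j}(v_{j+1}-v_{j})^{2}-\frac{\sigma}{\Delta x^{2}}(v_{0}^{2}+v_{N-1}^{2})$. The second piece combines with $\delta_{0},\delta_{N-1}$ to leave only $-f_{0}v_{0}^{2}/(2\Delta x)$ and $+f_{N-1}v_{N-1}^{2}/(2\Delta x)$. Each of these is at most $F_{\partial}/(2\Delta x)$ times the endpoint square, by \cref{eq:Fbdry}.

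The drift part $(f_{j}-f_{j+1})/(4\Delta x)$ of $\alpha_{j}$ splits into two pieces.

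\emph{Diagonal pieces.} A site $j$ receives $(f_{j-1}-f_{j+1})/(4\Delta x)=-\tfrac12 f'(x_{j})+\Order(\Delta x^{2})$ for smooth $f$, which gives $\tfrac12\max(-f')\|v\|^{2}$ up to the $\Order(\Delta x^{2})$ remainder. An endpoint receives only a one-sided half bond, bounded in modulus by $\Lambda/4$. Adding this to the wall term gives at most $M v_{0}^{2}$ and $M v_{N-1}^{2}$, which is exactly the definition of $M$.

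\emph{Gradient pieces.} These have coefficient at most $\Lambda/4$ in modulus. The hypothesis $\Delta x^{2}\le 2\sigma/\Lambda$ gives $\Lambda/4\le\sigma/(2\Delta x^{2})$, so they eat at most half of the diffusive gradient term. (The hypothesis $\Delta x\le 2\sigma/F$ is used only to keep the off-diagonals nonnegative, consistent with the Markov setting.)

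We then arrive at
\[
Q(v)\le\tfrac12\max(-f')+\Order(\Delta x^{2})-\frac{\sigma}{2\Delta x^{2}}\sum_{j}(v_{j+1}-v_{j})^{2}+M\,(v_{0}^{2}+v_{N-1}^{2}).
\]
The main obstacle is that $M\sim F_{\partial}/(2\Delta x)$ blows up; this is the Gershgorin failure. To control it we use a discrete trace inequality that trades the endpoint value against gradient energy in a boundary layer of $K$ sites. For $k<K$, Cauchy–Schwarz gives
\[
v_{0}^{2}\le 2v_{k}^{2}+2k\sum_{j<k}(v_{j+1}-v_{j})^{2}.
\]
Averaging over $k=0,\dots,K-1$ yields
\[
v_{0}^{2}\le\frac{2}{K}\sum_{k<K}v_{k}^{2}+2K\sum_{j<K}(v_{j+1}-v_{j})^{2},
\]
and the same holds at the right wall.

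We choose $K=\lfloor\sigma/(8M\Delta x^{2})\rfloor$, which is at least $1$ because $4M\Delta x^{2}\le\sigma$ (with $K\ge\sigma/(16M\Delta x^{2})$ after flooring). With this choice $2KM\le\sigma/(4\Delta x^{2})$, so both walls together consume at most the remaining $\sigma/(2\Delta x^{2})$ gradient budget. The leftover contribution is at most $(2M/K)\|v\|^{2}\le 32M^{2}\Delta x^{2}/\sigma$, which is the second term of \cref{eq:abscissa}.

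The delicate points are the constant bookkeeping in the floor of $K$ and the requirement that the two boundary layers not overlap. We need $2K\le N$, i.e.\ $\sigma/M\lesssim x_{\max}-x_{\min}$. If this fails, we would cap $K$ at $N/2$, which only improves the bound once $\Delta x$ is small relative to the fixed domain. Finally, the limit statement follows by substituting $M\Delta x=F_{\partial}/2+\Lambda\Delta x/4$ into $32M^{2}\Delta x^{2}/\sigma$, which tends to $8F_{\partial}^{2}/\sigma$ as $\Delta x\to0$.
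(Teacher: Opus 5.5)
Your route is the paper's. Expanding $H_{1}$ together with $\Delta A$ bond by bond gives exactly the paper's identity: the Dirichlet term $-\sum_{j}\alpha_{j}(v_{j+1}-v_{j})^{2}$, the interior drift sum, and the wall terms $\beta_{N-2}v_{N-1}^{2}-\beta_{0}v_{0}^{2}$, which the paper reaches through the Markov bond sum of $A$ and summation by parts. The second hypothesis absorbs the drift part of $\alpha_{j}$ the same way, and the walls are closed by the same boundary-layer trace inequality. The gap is in the constant bookkeeping you flagged yourself. Averaging $2k$ over $k<K$ gives $K-1$, not $2K$. Having given away that factor of two, you halve the layer to $K=\lfloor\sigma/(8M\Delta x^{2})\rfloor$. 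But $4M\Delta x^{2}\le\sigma$ only yields $\sigma/(8M\Delta x^{2})\ge\tfrac12$, so $K$ may be $0$ and $\lfloor y\rfloor\ge y/2$ is not available; as written, your argument needs $8M\Delta x^{2}\le\sigma$. The paper instead keeps the sharp coefficient and takes $K=\lfloor\sigma/(4M\Delta x^{2})\rfloor\ge1$, so that $2MK\le\sigma/(2\Delta x^{2})$ pays for both walls. It bounds each wall's averaged $L^{2}$ term by the whole $\|v\|^{2}$, so the leftover is $4M/K\le32M^{2}\Delta x^{2}/\sigma$ and the two layers may overlap.

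Your capping remark does not settle $K$ against $N$, and the paper does not check this either. The trace inequality averages over $K$ sites, so it needs $K\le N$; on a constant vector it fails once $K>2N$. Capping $K$ makes $K$ smaller and the leftover $2M/K$ larger, not smaller: about $4M/N\to2F_{\partial}/(x_{\max}-x_{\min})$. Your non-overlap condition is really $\sigma/(4M\Delta x)\lesssim x_{\max}-x_{\min}$, i.e.\ $\sigma\lesssim2F_{\partial}(x_{\max}-x_{\min})$ as $\Delta x\to0$. That is a restriction on $\sigma$, and refining the grid does not remove it.

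Beyond it the stated constant is false. Take $f(x)=x$ on $[0,1]$ with $\sigma=20$: all three hypotheses hold, and the right-hand side of \eqref{eq:abscissa} tends to $-\tfrac12+\tfrac{8}{20}<0$ as $\Delta x\to0$. Yet $\lambda_{\max}(H_{1})\ge\mathbf{1}^{T}A\mathbf{1}/N=0$, because the columns of $A$ sum to zero. (You and the paper also share a second slip that this example uses. The interior sum runs over $1\le j\le N-2$ only, so bounding it by $\tfrac12\max_{x}(-f')\,\|v\|^{2}$ presumes $\max_{x}(-f')\ge0$; otherwise the endpoint coefficient grows by up to $\Lambda/2$.)

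Taking $K:=\min\bigl(N,\lfloor\sigma/(4M\Delta x^{2})\rfloor\bigr)$ gives the bound with $32M^{2}\Delta x^{2}/\sigma$ replaced by $\max\bigl(32M^{2}\Delta x^{2}/\sigma,\,4M/N\bigr)$, which is still uniform in $\Delta x$. So the qualitative grid-independence survives, but the explicit constant needs this extra branch or an additional hypothesis.
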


\begin{proof}[Proof sketch]
  The quadratic form collapses onto the bonds,
  \begin{equation}
    \boldsymbol{\rho}^{\!\top}\!H_{1}\boldsymbol{\rho}
    \;=\;\sum_{j=0}^{N-2}\Bigl[
        -\alpha_{j}(\rho_{j+1}-\rho_{j})^{2}
        +\beta_{j}\bigl(\rho_{j+1}^{2}-\rho_{j}^{2}\bigr)\Bigr],
    \label{eq:quadform}
  \end{equation}
  and summation by parts sends the second sum to
  $\tfrac12\max_{x}(-f'(x))\|\boldsymbol{\rho}\|^{2}$ plus two wall terms.  A
  discrete trace inequality bounds each wall value by a multiple of the
  Dirichlet sum plus a multiple of $\|\boldsymbol{\rho}\|^{2}$, and
  \cref{eq:abscissa_hyp} gives $\alpha_{j}\ge\sigma/(2\Delta x^{2})$, enough
  to absorb the Dirichlet part.  \Cref{app:abscissa} has the details.
\end{proof}

Since \cref{eq:warped_pde} is symmetric hyperbolic with propagation speeds
the eigenvalues of $H_{1}$, the solution at $(t,\xi)$ depends on the data
only on $[\xi-\lambda_{\max}t,\infty)$, with
$\lambda_{\max}:=\lambda_{\max}(H_{1})$.  For $\xi>\max(\lambda_{\max},0)t$
that interval lies in $\xi'>0$, where
$w(0,\xi')=e^{-\xi'}\boldsymbol{\rho}(0)$ exactly and the even extension
never reaches, so
\begin{equation}
  \boldsymbol{\rho}(t)\;=\;e^{\xi^{*}}\,w(t,\xi^{*})
  \qquad\text{for any }\xi^{*}>\lambda_{\max}t.
  \label{eq:shifted_recovery}
\end{equation}
When $\lambda_{\max}>0$ the usable amplitude is reduced by
$e^{-\lambda_{\max}t}$, a factor carried through the post-selection cost and
consistent with the lower bounds of~\cite{AnLiuWangZhao2025}.  The unshifted
projection onto $\xi>0$ used in~\cite{JinLiu2022} is correct under the
hypothesis stated there, which in our convention is $H_{1}\preceq0$; that
fails here, $H_{1}$ being indefinite.

\section{Pauli structure of \texorpdfstring{$H_{1}$ and $H_{2}$}{H1 and H2}}\label{sec:pauli}\label{sec:pauli_conv}\label{sec:bipartite}\label{sec:weight_conc}

Take $N=2^{n}$ and store the probability vector on $n$ qubits as
\begin{equation}
  |\boldsymbol{\rho}(t)\rangle
  \;=\;\frac{1}{\|\boldsymbol{\rho}(t)\|}\sum_{j=0}^{N-1}\rho_{j}(t)\,|j\rangle,
  \label{eq:quantum_state}
\end{equation}
the normalisation being recovered at readout from the post-selection
statistics.  Any Hermitian $N\times N$ matrix $M$
expands as
\begin{equation}
  M \;=\;\sum_{\mathbf{k}\in\{0,1,2,3\}^{n}}
          c_{\mathbf{k}}\,P_{\mathbf{k}},\qquad
  c_{\mathbf{k}}=\tfrac{1}{N}\,\tr(M P_{\mathbf{k}})\in\R,
  \label{eq:pauli_decomp}
\end{equation}
with $P_{\mathbf{k}}=\sigma_{k_{1}}\otimes\cdots\otimes\sigma_{k_{n}}$ and
$(\sigma_{0},\sigma_{1},\sigma_{2},\sigma_{3})=(I,X,Y,Z)$.  Write
\begin{equation}
  H_{1}=\sum_{k}c^{(1)}_{k}\,P_{k},
  \qquad
  H_{2}=\sum_{k}c^{(2)}_{k}\,Q_{k},
  \label{eq:H12_pauli}
\end{equation}
where $\{P_{k}\}$ and $\{Q_{k}\}$ enumerate the strings with non-zero
coefficient in $H_{1}$ and $H_{2}$; the two sets are disjoint by \cref{thm:chromatic}.  Then \cref{eq:schrodinger_family} reads
\begin{equation}
  \mathcal{H}(\eta)
  \;=\;\sum_{k}\bigl(2\pi\eta\,c^{(1)}_{k}\bigr)\,P_{k}
   \;+\;\sum_{k}c^{(2)}_{k}\,Q_{k}.
  \label{eq:H_eta_pauli}
\end{equation}

\paragraph{Pauli structure of tridiagonal generators}

Adding $1$ to a bond index $j$ flips a run of low-order bits.  If the
least-significant $0$-bit of $j$ lies at position $k$, call $m:=k+1$ the
\emph{carry length}, so that
\[
  j \;=\; b_{n-1}\cdots b_{m}\,\underbrace{0\,1\cdots 1}_{m},
  \qquad
  j+1 \;=\; b_{n-1}\cdots b_{m}\,\underbrace{1\,0\cdots 0}_{m}:
\]
the upper $n-m$ bits (the \emph{prefix}) agree and the lower $m$ (the
\emph{suffix}) are all flipped.  Hence \cref{eq:Spm} factorises as
\begin{equation}
  S_{j}^{\pm}
  \;=\;|p\rangle\langle p|_{\mathrm{pre}}\otimes\Sigma_{m}^{\pm},
  \label{eq:bond_factor}
\end{equation}
with $p$ the common prefix and
\begin{align}
  \Sigma_{m}^{+} &:= |s_{m}\rangle\langle s_{m}'|+|s_{m}'\rangle\langle s_{m}|,
    \label{eq:Sigmaplus}\\
  \Sigma_{m}^{-} &:= i\bigl(|s_{m}\rangle\langle s_{m}'|-|s_{m}'\rangle\langle s_{m}|\bigr),
    \label{eq:Sigmaminus}
\end{align}
where $|s_{m}\rangle=|0\,1\cdots1\rangle$ and $|s_{m}'\rangle=|1\,0\cdots0\rangle$.

\begin{proposition}[Bipartite structure and commuting families;
  Arseniev et al.~\cite{Arseniev2024}]
  \label{thm:pauli_structure}
  Let $B\in\C^{N\times N}$, $N=2^{n}$, be tridiagonal.  Every non-zero Pauli
  string in the decomposition \cref{eq:pauli_decomp} of $B$ lies in one of
  the $n+1$ disjoint sets
  \begin{equation}
    S^{n}_{m}\;:=\;\{I,Z\}^{\otimes(n-m)}\otimes\{X,Y\}^{\otimes m},
    \qquad m=0,\dots,n,
    \label{eq:bipartite}
  \end{equation}
  indexed by the carry length $m$: the prefix qubits carry only $\{I,Z\}$
  and the suffix qubits only $\{X,Y\}$.  Encoding a string by
  $(x,z)\in\{0,1\}^{n}\times\{0,1\}^{n}$ through the Walsh function
  \begin{equation}
    \hat{W}(x,z)
    \;:=\; i^{\,x\cdot z}\,\bigotimes_{i=1}^{n} X_{i}^{x_{i}}Z_{i}^{z_{i}},
    \label{eq:walsh}
  \end{equation}
  two strings $\hat{W}(x,z)$, $\hat{W}(a,b)$ commute iff
  \begin{equation}
    x\cdot b \;\equiv\; a\cdot z \pmod{2};
    \label{eq:commute_criterion}
  \end{equation}
  in particular strings sharing an $x$-vector and a $Y$-parity commute.
  Every string of $S^{n}_{m}$ has $x=V_{m}:=(0,\dots,0,1,\dots,1)$ with $m$
  trailing ones, so splitting by $Y$-parity,
  \begin{align}
    S_{m,+} &:= \bigl\{\hat{W}(V_{m},z)\;:\; V_{m}\cdot z\equiv 0\bigr\},
      \label{eq:Smp}\\
    S_{m,-} &:= \bigl\{\hat{W}(V_{m},z)\;:\; V_{m}\cdot z\equiv 1\bigr\},
      \label{eq:Smm}
  \end{align}
  each subset is internally commuting, and a real tridiagonal $B$ populates
  at most $2n+1$ of them.
\end{proposition}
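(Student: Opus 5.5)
The plan is to prove the proposition in three stages: the support claim, the commutation criterion, and the family count.

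\emph{Support.} By linearity of \cref{eq:pauli_decomp} it suffices to expand the matrix units making up $B$. There are two kinds.

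\begin{itemize}
\item Diagonal units $|j\rangle\langle j|=\bigotimes_{i}\tfrac12(I+(-1)^{b_i}Z)$ lie in $\{I,Z\}^{\otimes n}=S^{n}_{0}$.
\item Off-diagonal units $|j\rangle\langle j+1|$ and $|j+1\rangle\langle j|$ factorise by \cref{eq:bond_factor} as $|p\rangle\langle p|_{\mathrm{pre}}\otimes|s_m\rangle\langle s_m'|$ (or its transpose), where $m$ is the carry length of $j$. The prefix projector expands in $\{I,Z\}^{\otimes(n-m)}$. Each suffix qubit carries a single-qubit unit $|0\rangle\langle1|=\tfrac12(X+iY)$ or $|1\rangle\langle0|=\tfrac12(X-iY)$, hence lies in $\{X,Y\}$. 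The tensor product therefore lies in $S^{n}_{m}$.
\end{itemize}

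The sets $S^{n}_{m}$ are disjoint because $m$ is read off as the number of qubits carrying $X$ or $Y$, which always form a trailing block. This gives the support claim with $m=0,\dots,n$.

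\emph{Commutation.} We use the Weyl relation $Z^{z}X^{x}=(-1)^{xz}X^{x}Z^{z}$ qubitwise. Then
\[
\hat W(x,z)\hat W(a,b)=(-1)^{x\cdot b+a\cdot z}\,\hat W(a,b)\hat W(x,z),
\]
since the phase prefactors $i^{x\cdot z}$ are scalars. This gives \cref{eq:commute_criterion}. For two strings with a common $x=a=V_m$, the condition reads $V_m\cdot b\equiv V_m\cdot z$. The quantity $V_m\cdot z$ counts the suffix qubits with $z_i=1$, i.e.\ the $Y$ factors, so it is exactly the $Y$-parity. Hence each of $S_{m,\pm}$ is internally commuting. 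A string in $S^n_m$ has $x=V_m$ because its $X$/$Y$ positions are exactly the last $m$ qubits, and $Y=iXZ$ corresponds to $x_i=z_i=1$ up to the phase absorbed in $\hat W$.

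\emph{Count for real $B$.} The target is at most $2n+1$ families, i.e.\ one of the $2(n+1)$ subsets must be empty. The candidate is $S_{0,-}$, which is empty by definition: with $V_0=0$ the parity is always $0$. So only $S_{0,+}$ survives at $m=0$, and the count is $1+2n$.

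For completeness I would also check the claimed reality constraint. For real $B$, $c_{\mathbf k}=\tfrac1N\tr(BP_{\mathbf k})$ is real when $P_{\mathbf k}$ is real (even number of $Y$'s) and imaginary when it has an odd number of $Y$'s. This explains why symmetric parts populate $S_{m,+}$ and antisymmetric parts populate $S_{m,-}$, as \cref{thm:chromatic} will later use.

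\emph{Main obstacle.} I expect the only delicate step to be the phase bookkeeping: matching the $Y$-parity with $V_m\cdot z$ and checking that $i^{x\cdot z}$ makes $\hat W$ Hermitian, so that the criterion concerns the strings themselves rather than phase-shifted versions of them.
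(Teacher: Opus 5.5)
Your proof is correct and follows essentially the paper's route: the support claim comes from the bond factorisation, with the prefix projector expanded in $\{I,Z\}$ and the fully flipped suffix expanded in $\{X,Y\}$. Beyond that, you derive the commutation criterion from $ZX=-XZ$ and the $2n+1$ count from $S_{0,-}=\emptyset$, both of which the paper simply cites from Arseniev et al.; your count never uses reality of $B$, so the bound holds for any tridiagonal $B$, and reality matters only later, for assigning the parity classes to $H_{1}$ and $H_{2}$.
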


\noindent These are Propositions~1--3 and Corollary~1 of~\cite{Arseniev2024}, restated
in the notation above; we claim no novelty in them.  The mechanism is
\cref{eq:bond_factor}: the prefix projector expands in $\{I,Z\}$ alone,
\begin{equation}
  |p\rangle\langle p|
  \;=\;\frac{1}{2^{n-m}}\bigotimes_{i=m}^{n-1}
        \bigl(\Id+(-1)^{p_{i}}\,Z_{i}\bigr),
  \label{eq:prefix_proj}
\end{equation}
and is the identity on the suffix, while $\Sigma_{m}^{\pm}$ connects two
states differing in every suffix slot and so expands on
$\{X,Y\}^{\otimes m}$, since $\langle s|P|s'\rangle$ vanishes unless
$P_{i}\in\{I,Z\}$ wherever $s_{i}=s_{i}'$ and $P_{i}\in\{X,Y\}$ wherever
$s_{i}\neq s_{i}'$.  What the Fokker--Planck setting adds is the assignment
of the two parity classes to the two Hermitian parts, and the extension to
the projector-weighted decomposition \cref{eq:A_decomp} in $d$ dimensions.

\begin{corollary}[Commuting families of the Hermitian split]
  \label{thm:chromatic}
  For the generator \cref{eq:A_entries} the non-zero strings satisfy
  \begin{equation}
    H_{1}:\;\{S_{0,+},S_{1,+},\dots,S_{n,+}\},
    \qquad
    H_{2}:\;\{S_{1,-},S_{2,-},\dots,S_{n,-}\},
    \label{eq:H1_families}
  \end{equation}
  so $G_{\mathrm{total}}\le2n+1=\Order(\log N)$.  Under \cref{eq:A_decomp}
  each axis contributes its own families and
  $G_{\mathrm{total}}\le\sum_{k}(2n_{k}+1)=2n_{\mathrm{tot}}+d$, linear in
  the total qubit count rather than $d$ times it.
\end{corollary}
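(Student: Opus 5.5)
The plan is to work bond by bond through \cref{eq:H1H2_bond} and \cref{eq:bond_factor}. By \cref{eq:H1H2_bond}, $H_{1}$ is a multiple of $\Id$ plus real combinations of $S_{j}^{+}$, and $H_{2}$ is a real combination of $S_{j}^{-}$. By \cref{eq:bond_factor}, each bond with carry length $m$ equals $|p\rangle\langle p|\otimes\Sigma_{m}^{\pm}$, and the prefix projector expands in $\{I,Z\}$ by \cref{eq:prefix_proj}. The whole question therefore reduces to the $Y$-parity of the Pauli strings appearing in $\Sigma_{m}^{+}$ and $\Sigma_{m}^{-}$. The prefix factor contributes only $I$ and $Z$, so it does not change $Y$-parity (defined as $V_{m}\cdot z$, which counts $Y$'s on the suffix). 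The identity term $-2\sigma\Delta x^{-2}\Id$ accounts for $S_{0,+}$, whose $x$-vector is $V_{0}=0$ and whose strings lie in $\{I,Z\}^{\otimes n}$, with even parity trivially.

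The key step is a reality/symmetry argument, which avoids expanding $\Sigma_{m}^{\pm}$ explicitly. Take a string $P=\hat W(V_{m},z)\in S^{n}_{m}$. Its matrix entries are real or purely imaginary according to the number of $Y$ factors, since $Y$ is imaginary and $I,X,Z$ are real. Equivalently, $P^{T}=(-1)^{\#Y}P$. Now $\Sigma_{m}^{+}$ is real symmetric, so $c=\tr(\Sigma^{+}_{m}P)/2^{m}$ must be real. This forces any contributing $P$ to be real, i.e. to have an even number of $Y$'s, hence $P\in S_{m,+}$. For odd $P$ one can check directly that $\tr(MP)=0$ whenever $M$ is real symmetric and $P^{T}=-P$, since $\tr(MP)=\tr((MP)^{T})=\tr(P^{T}M)=-\tr(MP)$.

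The same argument handles $\Sigma_{m}^{-}$, which is imaginary antisymmetric, i.e. $i$ times a real antisymmetric matrix. A contributing $P$ must satisfy $P^{T}=-P$, giving odd $Y$-parity and $P\in S_{m,-}$. The case $m=0$ cannot occur for $H_{2}$, because a bond always flips at least one bit. Tensoring with the prefix $\{I,Z\}$ expansion preserves both the transpose sign and the count of $Y$'s, so the classification lifts to the full strings. This gives \cref{eq:H1_families} and disjointness of the $H_{1}$ and $H_{2}$ supports. Counting gives $(n+1)+n=2n+1$ families.

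For the $d$-dimensional statement, I would use \cref{eq:A_decomp}. The transverse factors are computational-basis projectors and thus expand in $\{I,Z\}$ on the other axes' qubits. Each axis-$k$ term then lies in strings whose $x$-vector is $V_{m}$ supported on axis $k$'s register, with $m=0,\dots,n_{k}$, and the same parity split applies. The $m=0$ (diagonal, $x=0$) family is shared across axes, since all such strings commute. Counting the families therefore gives one common $S_{0,+}$ plus $2n_{k}$ per axis, which is at most $\sum_{k}(2n_{k}+1)$. This justifies the stated bound and sharpens it to $2n_{\mathrm{tot}}+1$. The only mildly delicate point is checking that the transverse projector weights remain real and symmetric, so that the transpose argument still applies. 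This holds because the weights are real diagonal matrices and all coefficients $\alpha_{j},\beta_{j}$ are real.
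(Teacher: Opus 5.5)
Your proof is correct and rests on the same key fact as the paper's, the transpose parity $P^{T}=(-1)^{w(P)}P$: you apply it bond by bond to $\Sigma_{m}^{\pm}$ and lift it through the $\{I,Z\}$ prefix, whereas the paper applies it once to $H_{1}$ and $H_{2}$ using linear independence of the Pauli basis. Your observation that the $m=0$ families of different axes share the zero $x$-vector and merge into one, giving $2n_{\mathrm{tot}}+1$, is also correct and slightly sharper than the paper, whose claim that distinct axes give disjoint families holds literally only for $m\ge1$.
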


\begin{proof}
  $A$ is real tridiagonal, so \cref{thm:pauli_structure} applies and at most
  $2n+1$ subsets are populated; which ones follows from transposition rather
  than Hermiticity, the coefficients \cref{eq:pauli_decomp} being real for
  either part.  Since $Y^{T}=-Y$ while $I,X,Z$ are symmetric,
  \begin{equation}
    P^{T}=(-1)^{w(P)}\,P,\qquad w(P):=x\cdot z\bmod 2,
    \label{eq:pauli_transpose}
  \end{equation}
  so even-$Y$ strings are symmetric matrices and odd-$Y$ strings
  antisymmetric.  Transposing \cref{eq:H12_pauli} and invoking linear
  independence of the Pauli basis, the symmetry of $H_{1}=(A+A^{T})/2$ kills
  every odd-$Y$ coefficient and the antisymmetry of $H_{2}=i(A^{T}-A)/2$
  every even-$Y$ one, the vanishing diagonal of $H_{2}$ excluding $m=0$,
  which is \cref{eq:H1_families}; real coefficients and an imaginary $H_{2}$
  are consistent because the odd-$Y$ strings are themselves imaginary
  antisymmetric matrices.  In $d$ dimensions the transverse factors of
  \cref{eq:A_decomp} are diagonal projectors, so an axis-$k$ string carries
  $\{I,Z\}$ on every transverse register and its $x$-vector is $V_{m}$
  supported on register $k$; strings sharing an axis and carry level
  therefore commute by \cref{eq:commute_criterion}, and distinct axes give
  disjoint families (\cref{rem:multidim_pauli}).
\end{proof}

\paragraph{Sparsity per family}

The family count is independent of the drift degree $L$.  What $L$ controls,
and what is specific to a polynomial drift rather than to a general
tridiagonal matrix, is how many strings within a family are populated: it
bounds the number of non-zero Pauli terms, though not the rank of the
subgroup they generate.  This subsection and the next are our contribution.
For fixed $m$, a string $\hat{W}(V_{m},z)\in S_{m,+}$ is populated iff the
corresponding prefix coefficient of
$\sum_{p\in\mathcal{P}_{m}}\alpha_{j(p,m)}|p\rangle\langle p|$ is non-zero,
the suffix contributing all $2^{m-1}$ strings of $\{X,Y\}^{\otimes m}$ of one
$Y$-parity.  Sparsity is therefore controlled by the prefix, through the
following property of a uniform grid.
\begin{lemma}[Affine grid coordinate]\label{lem:affine}
  On $x_{j}=x_{\min}+j\,\Delta x$ the coordinate is affine in the bits of
  $j=\sum_{i}b_{i}2^{i}$,
  \begin{equation}
    x_{j}\;=\;x_{\min}+\Delta x\sum_{i=0}^{n-1}2^{i}\,b_{i}.
    \label{eq:affine_grid}
  \end{equation}
  Hence for any polynomial $P$ of degree $\le L$, $\operatorname{diag}(P(x_{j}))$
  is multilinear of degree $\le L$ in $\hat n_{i}=(\Id-Z_{i})/2$, and its
  Pauli expansion is supported on $Z$-strings of Hamming weight $\le L$.
\end{lemma}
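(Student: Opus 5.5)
The affine identity \cref{eq:affine_grid} is immediate from the binary expansion $j=\sum_{i}b_{i}2^{i}$ and the definition $x_{j}=x_{\min}+j\,\Delta x$. The bits are the eigenvalues of the number operators, $b_{i}\leftrightarrow\hat n_{i}=(\Id-Z_{i})/2$, which are diagonal and mutually commuting. So as an operator identity,
\[
  \operatorname{diag}(x_{j})=x_{\min}\Id+\Delta x\sum_{i=0}^{n-1}2^{i}\hat n_{i},
\]
a polynomial of degree one in the $\hat n_{i}$. I would state this first and then pass to a general polynomial $P$.

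For $P(x)=\sum_{\ell\le L}a_{\ell}x^{\ell}$, the map $x\mapsto\operatorname{diag}(P(x_{j}))$ is a ring homomorphism from functions on the grid to diagonal matrices. Hence $\operatorname{diag}(P(x_{j}))=P\bigl(\operatorname{diag}(x_{j})\bigr)$. Expanding each power $\bigl(x_{\min}\Id+\Delta x\sum_{i}2^{i}\hat n_{i}\bigr)^{\ell}$ by the multinomial theorem, which is valid because the $\hat n_{i}$ commute, gives a sum of products $\hat n_{i_{1}}\cdots\hat n_{i_{r}}$ with $r\le\ell\le L$.

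The key reduction is idempotency: $\hat n_{i}^{2}=\hat n_{i}$, since the eigenvalues are $0,1$. Every monomial therefore collapses to $\prod_{i\in T}\hat n_{i}$ over a set $T$ of distinct indices with $|T|\le L$. This makes the expression multilinear of degree $\le L$, and this step is the only place where care is needed. Repeated indices lower the degree rather than raising it, so the degree bound survives the reduction.

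Finally, substitute $\hat n_{i}=(\Id-Z_{i})/2$ in each product over $T$:
\[
  \prod_{i\in T}\hat n_{i}=2^{-|T|}\sum_{U\subseteq T}(-1)^{|U|}\prod_{i\in U}Z_{i}.
\]
This is a combination of $Z$-strings of Hamming weight $|U|\le|T|\le L$. Since Pauli strings form a basis, the Pauli expansion \cref{eq:pauli_decomp} of $\operatorname{diag}(P(x_{j}))$ is supported on $Z$-strings of weight at most $L$, with coefficients real because $P$ and the grid are real.

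One could also bound the support size as $\sum_{r\le L}\binom{n}{r}=\Order(n^{L})$, though this is not needed for the statement. No step is a genuine obstacle; the one point to state explicitly is idempotency, which is what converts polynomial degree into a bound on Pauli weight.
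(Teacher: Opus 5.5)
Your proposal is correct and follows essentially the paper's own argument. You expand the powers of the affine expression and use idempotency ($b_{i}^{2}=b_{i}$, equivalently $\hat n_{i}^{2}=\hat n_{i}$) to reach multilinear degree $\le L$; you then substitute $\hat n_{i}=(\Id-Z_{i})/2$, so that each monomial over a set $S$ yields only $Z$-strings supported on subsets of $S$. The only difference is cosmetic: you do the reduction at the operator level via $\operatorname{diag}(P(x_{j}))=P(\operatorname{diag}(x_{j}))$, whereas the paper reduces the scalar bits first and substitutes $b_{i}=\hat n_{i}$ afterwards.
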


\begin{proof}
  Raising \cref{eq:affine_grid} to a power $\ell\le L$ and reducing
  $b_{i}^{2}=b_{i}$ gives multilinear degree $\le\ell$; substituting
  $b_{i}=\hat n_{i}$, each monomial $\prod_{i\in S}b_{i}$ with $|S|\le L$
  contributes only $Z$-strings supported on subsets of
  $S$~\cite{Odonnell2014}.
\end{proof}

The map $p\mapsto j(p,m)=p\,2^{m}+(2^{m-1}-1)$ is affine, so
\cref{lem:affine} applies on the $(n-m)$-qubit prefix and the Walsh support
of $\alpha_{j(p,m)}$, and likewise of $\beta_{j(p,m)}$, has Hamming weight
$\le L$, of which there are at most
\begin{equation}
  \sum_{\ell=0}^{L}\binom{n-m}{\ell}
  \;\le\;\sum_{\ell=0}^{L}\binom{n}{\ell}
  \;=\;\Order(n^{L}).
  \label{eq:prefix_count}
\end{equation}

\begin{theorem}[Pauli sparsity per family]
  \label{thm:pauli_count}
  Let $f$ be polynomial of degree $\le L$ on the grid of \cref{lem:affine}.
  For each carry level $m$, the number of non-zero Pauli terms of $H_{1}$ or
  $H_{2}$ in $S_{m,\pm}$ is at most
  \begin{equation}
    2^{\max(m-1,0)}\sum_{\ell=0}^{L}\binom{n-m}{\ell}\;=\;\Order(2^{m}n^{L}),
    \label{eq:term_count}
  \end{equation}
  namely \cref{eq:prefix_count} times the $2^{m-1}$ suffix strings of one
  $Y$-parity ($1$ for $m=0$).  If $f$ is instead a degree-$L$ approximant of
  a smooth drift with sup-norm error $\delta$, the residual contributes Walsh
  coefficients of magnitude $\Order(\delta/\Delta x)$.
\end{theorem}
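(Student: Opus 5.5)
The plan is to restrict $H_{1}$ and $H_{2}$ to a single carry level $m$ and show that each restriction is a tensor product of a prefix diagonal and a fixed suffix operator. We then count Pauli strings in each factor separately.

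\emph{Step 1: factorisation.} By \cref{eq:H1H2_bond} and \cref{eq:bond_factor}, the bonds with carry length $m\ge1$ are exactly $j=j(p,m)=p\,2^{m}+(2^{m-1}-1)$ for $p$ ranging over the $(n-m)$-bit prefixes. Summing over them gives
\[
  H_{1}\big|_{m}=D^{\alpha}_{m}\otimes\Sigma_{m}^{+},\qquad
  H_{2}\big|_{m}=-D^{\beta}_{m}\otimes\Sigma_{m}^{-},
\]
where $D^{\alpha}_{m}=\sum_{p}\alpha_{j(p,m)}|p\rangle\langle p|$ and $D^{\beta}_{m}$ is defined in the same way. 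For $m=0$ the only contribution is the constant diagonal $-2\sigma/\Delta x^{2}\,\Id$, which is a single string. The bond level $m\ge1$ contributes nothing at $m=0$, since $\Sigma^{\pm}$ there would be trivial. Because the Pauli expansion of a tensor product is the product of the expansions, the number of non-zero strings in $S_{m,\pm}$ is at most (Walsh support of $D_{m}$) $\times$ (Pauli support of $\Sigma^{\pm}_{m}$).

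\emph{Step 2: suffix count.} The states $|s_{m}\rangle$ and $|s_{m}'\rangle$ differ in every suffix bit. Hence only strings in $\{X,Y\}^{\otimes m}$ contribute, and $\Sigma^{+}_{m}$ (real symmetric) and $\Sigma^{-}_{m}$ (imaginary antisymmetric) each select one $Y$-parity by \cref{eq:pauli_transpose}. That leaves at most $2^{m-1}$ strings. One can check that every such string has $|\langle s_{m}|P|s_{m}'\rangle|=1$ with a phase, but only the upper bound is needed here.

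\emph{Step 3: prefix count, the key step.} Define $g(x)=f(x)$, so that $\alpha_{j}=\sigma/\Delta x^{2}+(f(x_{j})-f(x_{j}+\Delta x))/(4\Delta x)$ and $\beta_{j}=(f(x_{j})+f(x_{j}+\Delta x))/(4\Delta x)$. Both are polynomials of degree $\le L$ in $x_{j}$; the difference has degree $L-1$, which only helps. Substituting $x_{j(p,m)}=x_{\min}+\Delta x(2^{m-1}-1)+2^{m}\Delta x\,p$, which is affine in the prefix bits of $p$, we apply \cref{lem:affine} on the $(n-m)$-qubit prefix. This shows $D_{m}$ is a multilinear polynomial of degree $\le L$ in the $\hat n_{i}$, so its $Z$-support has Hamming weight $\le L$, which gives \cref{eq:prefix_count}. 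Multiplying by Step 2 yields \cref{eq:term_count}. The only point needing care is checking that the affine rescaling by $2^{m}$ and the shift do not raise the multilinear degree, and they do not, since composition with an affine map preserves total degree.

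\emph{Step 4: approximant.} Write $f=P+r$ with $\|r\|_{\infty}\le\delta$. The $r$-part of $\alpha_{j}$ and $\beta_{j}$ is bounded by $2\delta/(4\Delta x)$ entrywise. Each Walsh coefficient of a diagonal is an average of its entries against $\pm1$ signs, and $\Sigma^{\pm}_{m}$ has Pauli coefficients of modulus $2^{1-m}$. Together these give residual coefficients of size $\Order(\delta/\Delta x)$.
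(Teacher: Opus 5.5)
Your proposal is correct and follows the paper's own argument: the family factorises as a prefix diagonal $D_m$ tensored with $\Sigma_m^{\pm}$, the suffix contributes the $2^{m-1}$ strings of one $Y$-parity, and \cref{lem:affine}, applied through the affine map $p\mapsto j(p,m)$, bounds the prefix Walsh support by \cref{eq:prefix_count}. Your Step~4 also gives an argument for the approximant clause, which the paper states without proof, and your $m=0$ count tacitly uses the bulk operator, consistent with the paper's exclusion of $\Delta A$ in \cref{rem:boundary}.
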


\Cref{eq:term_count} is saturated for a generic degree-$L$ drift; symmetries
of the drift or of the grid, for instance a grid centred on an axis of
symmetry of $f$, annihilate further Walsh coefficients and make the affected
families strictly sparser.

\begin{proposition}[Structural results in $d$ dimensions]\label{rem:multidim_pauli}
  Under \cref{eq:A_decomp} on $n_{\mathrm{tot}}=\sum_{k}n_{k}$ qubits, with a
  uniform grid along every axis and the joint smoothness hypothesis of
  \cref{sec:setup},
  \cref{thm:pauli_structure,thm:chromatic,thm:pauli_count} hold with
  $n\to n_{\mathrm{tot}}$: every string is $\{I,Z\}$ on the transverse
  registers and the axis-$k$ prefix and $\{X,Y\}$ on the $m$-qubit axis-$k$
  suffix, the families number $2n_{\mathrm{tot}}+d$, and each carries at most
  $2^{\max(m-1,0)}\sum_{\ell\le L}\binom{n_{\mathrm{tot}}-m}{\ell}
  =\Order(2^{m}n_{\mathrm{tot}}^{L})$ terms.
\end{proposition}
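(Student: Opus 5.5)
The plan is to reduce everything to the one-dimensional analysis applied line by line, using the fact that the transverse factors in \cref{eq:A_decomp} are diagonal. Fix an axis $k$ and order the qubits so that register $k$ is last. Symmetrisation commutes with the transverse projectors, since $E_{pp}^{T}=E_{pp}$. Hence $A^{(k)}$ splits into Hermitian parts
\[
H_{1}^{(k)}=\sum_{\mathbf{p}_{\perp}}E_{\mathbf{p}_{\perp}}\otimes H_{1,k}(\mathbf{p}_{\perp}),
\qquad
H_{2}^{(k)}=\sum_{\mathbf{p}_{\perp}}E_{\mathbf{p}_{\perp}}\otimes H_{2,k}(\mathbf{p}_{\perp}).
\]
Here $H_{1,k}$ and $H_{2,k}$ have the bond form \cref{eq:H1H2_bond}, with coefficients $\alpha_{j}(\mathbf{p}_{\perp})$ and $\beta_{j}(\mathbf{p}_{\perp})$ built from $f_{k}$ on the grid line. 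Applying the bond factorisation \cref{eq:bond_factor} inside register $k$ gives
\[
H_{1}^{(k)}=\sum_{m}\Bigl(\sum_{\mathbf{p}_{\perp},p}\alpha_{j(p,m)}(\mathbf{p}_{\perp})\,|\mathbf{p}_{\perp},p\rangle\langle\mathbf{p}_{\perp},p|\Bigr)\otimes\Sigma_{m}^{+},
\]
and likewise for $H_{2}^{(k)}$ with $\Sigma_{m}^{-}$.

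The operator in brackets is diagonal on the combined $(n_{\mathrm{tot}}-n_{k})$-qubit transverse register and $(n_{k}-m)$-qubit prefix. By \cref{eq:prefix_proj} it therefore expands only in $\{I,Z\}$ strings. The suffix factor $\Sigma_{m}^{\pm}$ expands in $\{X,Y\}^{\otimes m}$ of a fixed $Y$-parity, exactly as in the proof of \cref{thm:chromatic}. This settles the bipartite form. The $x$-vector of every such string is $V_{m}$ supported on the register-$k$ suffix. Strings with the same axis, the same $m$ and the same parity therefore commute by \cref{eq:commute_criterion}, and distinct $(k,m)$ give distinct $x$-vectors, hence disjoint sets. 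The parity assignment uses the transpose argument \cref{eq:pauli_transpose} unchanged. This yields $n_{k}+1$ families from $H_{1}$ and $n_{k}$ from $H_{2}$ per axis, for a total of $\sum_{k}(2n_{k}+1)=2n_{\mathrm{tot}}+d$.

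For sparsity, apply \cref{lem:affine} jointly. Each grid coordinate $x_{i}$ is affine in the bits of its own register, and $p\mapsto j(p,m)$ is affine in the prefix bits. Hence $f_{k}(\mathbf{x}_{j})$ and $f_{k}(\mathbf{x}_{j+1})$, viewed as functions of all transverse and prefix bits, are polynomials of \emph{total} degree $\le L$ in those bits. The joint-degree hypothesis of \cref{sec:setup} is exactly what is needed here: a per-axis degree $L$ would give degree $dL$. Multilinear reduction keeps the degree at most $L$. So $\alpha_{j(p,m)}(\mathbf{p}_{\perp})$ and $\beta_{j(p,m)}(\mathbf{p}_{\perp})$ have Walsh support of Hamming weight $\le L$ on $n_{\mathrm{tot}}-m$ qubits, which gives at most $\sum_{\ell\le L}\binom{n_{\mathrm{tot}}-m}{\ell}$ prefix strings. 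Multiplying by the $2^{\max(m-1,0)}$ suffix strings, as in \cref{thm:pauli_count}, gives the stated count.

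I expect the main obstacle to be stating the joint affine structure carefully. One must check that the shift $x_{k}\mapsto x_{k}+\Delta x_{k}$ preserves total degree, which is immediate since it is a translation. One must also check that the constant diagonal $-2\sigma/\Delta x_{k}^{2}$ sits in $S_{0,+}$ and adds no new family.
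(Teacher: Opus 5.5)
Your proposal is correct and follows essentially the same route as the paper's proof in \cref{app:pauli_proofs}. Both arguments factorise each axis-$k$ bond into a diagonal transverse-and-prefix projector tensored with $\Sigma_{m}^{\pm}$, and read off the bipartite form, commutation and parity assignment from \cref{eq:prefix_proj,eq:commute_criterion,eq:pauli_transpose}. Both then obtain the term count by applying \cref{lem:affine} jointly on the $n_{\mathrm{tot}}-m$ non-suffix qubits under the joint-degree hypothesis.

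One small correction, which the paper's proof shares: at $m=0$ the vector $V_{0}$ vanishes for every axis, so ``distinct $(k,m)$ give distinct $x$-vectors'' fails and the $d$ level-zero sets coincide. In the bulk they hold only the identity string. This only lowers the family count, so the bound $2n_{\mathrm{tot}}+d$ stands.
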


The proof is in \cref{app:pauli_proofs}.  The dimension enters only
through the term count, a coupled drift spreading its Walsh support over the
transverse registers as well as the axis prefix; the family count and suffix
structure are unaffected, and that asymmetry is what sets the per-step gate count.

\begin{remark}[Boundary correction, and its cost]\label{rem:boundary}
  \Cref{thm:pauli_structure,thm:chromatic,thm:pauli_count} are stated for
  $A_{\mathrm{bulk}}$.  The correction $\Delta A$ of \cref{eq:A_split} is
  diagonal, hence lies in $H_{1}$, but its projectors expand with full
  $\{I,Z\}$ support and would populate all $2^{n}$ strings of $S_{0,+}$, so
  we keep it separate.  In one dimension \cref{eq:Htotal} then carries
  $\Delta A\otimes\qhat$, generating two momentum displacements conditioned
  through $n$-controlled gates on the all-zeros and all-ones states,
  $\Order(n)$ gates and one ancilla each and no Clifford, the projectors
  being diagonal.  In $d>1$ the wall along axis $k$ is a pair of
  $(d{-}1)$-faces and
  \begin{align}
    \Delta A
    &=\sum_{k=1}^{d}\Bigl[
        \operatorname{diag}\bigl(\delta^{(k)}_{-}\bigr)\otimes
          |0\rangle\langle0|_{k}
       +\operatorname{diag}\bigl(\delta^{(k)}_{+}\bigr)\otimes
          |N_{k}\!-\!1\rangle\langle N_{k}\!-\!1|_{k}\Bigr], \label{eq:dA_multidim}
    \\
    \delta^{(k)}_{\mp}(\mathbf{p}_{\perp})
    &=\frac{\sigma}{\Delta x_{k}^{2}}
       \mp\frac{f_{k}(x_{k,\mathrm{face}},\mathbf{p}_{\perp})}
               {2\Delta x_{k}},\nonumber
  \end{align}
  the sum over $k$ giving a site on several walls one term per wall, so edges
  and corners need no separate treatment.  Each $\delta^{(k)}_{\mp}$ has
  degree $\le L$ in the transverse multi-index, so by \cref{lem:affine} its
  Walsh support has weight $\le L$; all terms are diagonal, so
  $e^{-i\Delta A\otimes\qhat\Delta t}$ factorises exactly, at $2d$ faces carrying
  $\Order(n_{\mathrm{tot}}^{L})$ displacements of $\Order(L+n_{k})$ gates
  apiece.  The correction costs $\Order(d^{L+1}n^{L+1})$ per Trotter step, a
  factor $n$ below the bulk count, leaving the complexity analysis
  unchanged.
\end{remark}

\section{Quantum implementation}\label{sec:implementation}\label{sec:cv_lcu}\label{sec:factorisation}\label{sec:synthesis}\label{sec:qumode_prep}\label{sec:full_algorithm}

A \emph{qumode} is a quantum harmonic oscillator with $[\qhat,\phat]=i$ on
$L^{2}(\R)$, with improper position eigenstates
$\qhat|\eta\rangle_{\qhat}=\eta|\eta\rangle_{\qhat}$.  Couple the $n$-qubit
data register to one qumode through
\begin{equation}
  \Htot
  \;=\; H_{1}\otimes\qhat \;+\; H_{2}\otimes\Id_{\mathrm{osc}},
  \label{eq:Htotal}
\end{equation}
so that ${}_{\qhat}\langle\eta|\Htot|\eta\rangle_{\qhat}
=\eta H_{1}+H_{2}=\mathcal{H}(\eta)$ of
\cref{eq:schrodinger_family}, up to the $2\pi$ absorbed into the Fourier
convention.  Initialising the qumode in the Lorentzian state
\begin{equation}
  \psi(\eta) \;\propto\; \frac{2}{1+4\pi^{2}\eta^{2}},
  \label{eq:qumode_initial}
\end{equation}
matching \cref{eq:fourier_initial}, evolution under $\Htot$ gives
\begin{equation}
  |\Psi(t)\rangle
  \;=\;\int_{\R}\psi(\eta)\,e^{-i\mathcal{H}(\eta)t}\,
        |\boldsymbol{\rho}(0)\rangle\,|\eta\rangle\,d\eta,
  \label{eq:full_evolution}
\end{equation}
realising the entire $\eta$-continuum in one coherent state.

The two summands of \cref{eq:Htotal} do not commute,
$[H_{1}\otimes\qhat,H_{2}\otimes\Id]=[H_{1},H_{2}]\otimes\qhat$, so we
Trotterise,
\begin{equation}
  e^{-i\Htot\Delta t}
  \;=\; e^{-iH_{2}\otimes\Id\,\Delta t}\,
        e^{-iH_{1}\otimes\qhat\,\Delta t}
        \;+\;\Order(\Delta t^{2}),
  \label{eq:trotter1}
\end{equation}
and treat each block separately.  Block~1 is Hamiltonian simulation of
$H_{2}$ on the qubit register alone; partitioning into the families
$\{S_{m,-}\}$ of \cref{thm:chromatic} and Trotterising again,
\begin{equation}
  e^{-iH_{2}\Delta t}
  \;=\;\prod_{m=1}^{n} e^{-iH_{S_{m,-}}\Delta t}
   \;+\;\Order\!\left(\Delta t^{2}
        \sum_{m<m'}\bigl\|[H_{S_{m,-}},H_{S_{m',-}}]\bigr\|\right).
  \label{eq:trotter_H2}
\end{equation}
Block~2 acts in the qumode position basis as
$|\eta\rangle\mapsto e^{-i\eta H_{1}\Delta t}|\eta\rangle$ and decomposes
over $\{S_{m,+}\}$ identically, each factor carrying the extra tensor factor
$\qhat$.  Both blocks thus reduce to family exponentials, synthesised
exactly below at a cost set by the Pauli sparsity \cref{eq:term_count}.

\paragraph{Diagonal-block factorisation}

Fix $m$ and the $H_{1}$ family $S_{m,+}$; the $H_{2}$ case is identical with
$\beta_{j},\Sigma_{m}^{-}$ for $\alpha_{j},\Sigma_{m}^{+}$.  By
\cref{eq:bond_factor,eq:prefix_proj} the family operator is an exact tensor
product,
\begin{equation}
  H_{S_{m,+}}
  \;=\; D_{m}\otimes\Sigma_{m}^{+},
  \qquad
  D_{m} \;=\;\sum_{p\in\mathcal{P}_{m}}\alpha_{j(p,m)}\,
              |p\rangle\langle p|_{\mathrm{pre}} ,
  \label{eq:Dm_def}
\end{equation}
of a diagonal prefix operator on $n-m$ qubits and the fixed suffix operator
on $m$.  Two properties matter.  First, $\Sigma_{m}^{+}$ has rank two: it
connects the single pair $|s_{m}\rangle,|s_{m}'\rangle$, with spectrum
$\{+1,-1\}$ there and $0$ on the remaining $2^{m}-2$ states, for every $m$.
Second, $D_{m}$ is sparse in the monomial
basis: by the Walsh--Hadamard transform,
\begin{equation}
  D_{m}
  \;=\;\sum_{\substack{S\subseteq\{0,\dots,n-m-1\}\\ |S|\le L}}
        \hat{f}_{m}(S)\,\prod_{i\in S}\hat{n}_{i},
  \qquad \hat{n}_{i}:=\tfrac{\Id-Z_{i}}{2},
  \label{eq:Dm_monomials}
\end{equation}
with at most $\Order(n^{L})$ non-zero coefficients by
\cref{eq:prefix_count}.  The
$\hat{f}_{m}(S)$ are finite differences of $\alpha_{j(p,m)}$ in the prefix
index, hence computed in closed form from the drift coefficients without
assembling $A$; this is the classical preprocessing step.  All
$\hat{n}_{i}$ are diagonal and mutually commuting.

\paragraph{Exact synthesis by monomial-controlled displacements}

Take Block~2, $e^{-iH_{S_{m,+}}\otimes\qhat\,\Delta t}$; Block~1 is the same
with $\qhat\to\Id$ and displacements becoming phase rotations.  Every
summand of $H_{S_{m,+}}\otimes\qhat$ shares the factor $\Sigma_{m}^{+}$ and
differs only in commuting diagonal prefix factors, so all summands commute
and the exponential factorises exactly, with no intra-family Trotter error:
\begin{equation}
  e^{-iH_{S_{m,+}}\otimes\qhat\,\Delta t}
  \;=\;\prod_{|S|\le L}
        \exp\!\Bigl[-i\,\hat{f}_{m}(S)
          \Bigl(\textstyle\prod_{i\in S}\hat{n}_{i}\Bigr)
          \otimes\Sigma_{m}^{+}\otimes\qhat\,\Delta t\Bigr].
  \label{eq:product_synthesis}
\end{equation}
A Clifford $B_{m}$ on the suffix maps $\Sigma_{m}^{+}$ to
$\tilde{\Sigma}_{m}=\mathrm{diag}(+1,-1,0,\dots,0)$, any commuting Pauli set
being simultaneously diagonalisable by a
Clifford~\cite{Nielsen2010,Aaronson2004} compiled in
$\Order(n^{2}/\log n)$ gates by the Aaronson--Gottesman tableau algorithm.
In the rotated frame each factor of \cref{eq:product_synthesis} is a
multi-controlled momentum displacement: it displaces the qumode momentum by
$\mp\hat{f}_{m}(S)\Delta t$, conditioned on the prefix qubits in $S$ being
$|1\rangle$ and on the suffix lying in one of the two active states.  Since
$e^{-i\lambda\qhat}$ multiplies $|\eta\rangle$ by $e^{-i\lambda\eta}$ and
shifts $|\xi\rangle\mapsto|\xi-\lambda\rangle$, and displacements by
different amounts commute, the $\Order(n^{L})$ factors may be applied in any
order; such displacements are native on trapped-ion, dispersive circuit-QED,
and photonic platforms.

Two conventions fix the gate count: a controlled momentum displacement
$e^{-i\lambda\Pi\otimes\qhat}$ with $\Pi$ a basis projector counts as one
elementary gate, on the same footing as a controlled-phase rotation, and a
$k$-controlled gate compiles into $\Order(k)$ two-qubit gates through a
clean-ancilla ladder.  Each factor then carries at most $|S|+2m\le L+2m$
controls, $\le L$ prefix controls plus the $m$-fold selection of each active
suffix state, costing $\Order(L+m)=\Order(n)$ gates, so a family costs
$\Order(n^{L+1})$ gates beyond its Clifford; a native controlled
displacement onto the two-dimensional suffix subspace would remove the
$\Order(m)$ overhead.  On a rotated suffix eigenstate $|s\rangle$ the
circuit enacts
$|s\rangle|\eta\rangle\mapsto
e^{-i\lambda_{\mathcal{C}}(s)\eta\Delta t}|s\rangle|\eta\rangle$ with
$\lambda_{\mathcal{C}}(s)$ the family eigenvalue, so the product over
families reproduces $e^{-i\mathcal{H}(\eta)\Delta t}$ at every $\eta$
simultaneously.  \Cref{eq:product_synthesis} is verified operator-wise to $10^{-13}$ below.

\paragraph{State preparation}

The Lorentzian \cref{eq:qumode_initial} is not Gaussian, and a density
argument gives no rate.  We price it by the squeezed-Fock synthesis of Das
et al.~\cite{DasEtAl2026}, who treat exactly this class of hybrid kernel
states for the continuous-variable LCHS representation: the prepared state
\begin{equation}
  |\psi_{N_{F}}\rangle \;=\; S(r)\sum_{\nu=0}^{N_{F}-1}c_{\nu}\,|\nu\rangle,
  \label{eq:fock_kernel}
\end{equation}
synthesised by Law--Eberly or SNAP-plus-displacement protocols, has $L^{2}$
truncation error superalgebraic, $\Order(N_{F}^{-s/2})$ for every fixed $s$,
for Schwartz-class kernels, and subexponential,
$\Order(e^{-c\sqrt{N_{F}}})$, under strip analyticity
\cite[Thm.~2]{DasEtAl2026}.  Applied to the Lorentzian the rate is limited
by the tail: $|\psi|^{2}\sim\eta^{-4}$, the Fourier image of the kink of
$e^{-|\xi|}$ at $\xi=0$, is not Schwartz, only $s=1$ applies, and the
guaranteed decay is $\Order(N_{F}^{-1/2})$; we observe $\approx N_{F}^{-0.8}$
($1.7\times10^{-2}$ at $N_{F}=40$, optimised squeezing).

The kink is removable, an idea due to Jin, Liu, and
Ma~\cite{JinLiuMa2025,JinLiuMaIllPosed2025}, who smooth the initialisation
of the Schr\"{o}dingerised system to raise the approximation order in the
extended variable and reduce the extra register to about
$\log\log(1/\epsilon)$ qubits under a spectral discretisation of $\xi$.  The
mechanism transfers unchanged; only the notion of approximation differs,
ours being Fock truncation rather than a spectral grid.  The licence to
modify the profile is the domain-of-dependence argument of
\cref{sec:recovery_domain}: the warped-phase identity fixes the profile to
$e^{-\xi}$ only on the causal region, so recovery
\cref{eq:shifted_recovery} at $\xi^{*}>\lambda_{\max}t$ is unchanged by any
modification on $\xi<0$.  Replacing $e^{-|\xi|}$ by
$h(\xi)=e^{-\xi}\chi(\xi)$, with $\chi$ smooth, $1$ on $\xi\ge-1$ and $0$ on
$\xi\le-2$, makes $\hat{h}$ Schwartz, so \cite[Thm.~2(i)]{DasEtAl2026}
applies at every order and the truncation error decays faster than any
polynomial in $N_{F}$.  Full analyticity is unavailable: a kernel equal to
$e^{-\xi}$ on an interval and analytic would equal it everywhere.
Numerically the smooth kernel reaches $2.0\times10^{-2}$ at $N_{F}=20$ and
$2.4\times10^{-3}$ at $N_{F}=60$, against $3.1\times10^{-2}$ and
$1.3\times10^{-2}$ for the Lorentzian at matched squeezing; widening the
cutoff window, allowed whenever $\xi^{*}-\lambda_{\max}t$ is not small,
improves the constants at an $e^{-\xi^{*}}$ amplitude cost.  Since
\cref{eq:fock_kernel} is a finite Fock superposition every moment of $\qhat$
is finite, with $\|\Pi_{N_{F}}\qhat\Pi_{N_{F}}\|=\Order(\sqrt{N_{F}})$, the
property used in the Trotter bound; its non-Gaussianity is the
stellar rank $N_{F}-1$, and an $\epsilon$-accurate preparation perturbs the
post-selection probability by $\Order(\epsilon)$
\cite[Thm.~4]{DasEtAl2026}, so kernel error enters the budget linearly.

The other unpriced input is the amplitude encoding of
$\boldsymbol{\rho}(0)$.  Generic preparation of an arbitrary $N$-amplitude
state costs $\Order(N)$ gates and would defeat the polylogarithmic per-step
count, so what matters is the profiles actually used.  Xie and
Ben-Ami~\cite{XieBenAmi2025} give an efficient construction for discretised
Gaussians, and for the wider class of real polynomial functions
Gonzalez-Conde et al.~\cite{GonzalezConde2024} give two amplitude-encoding
methods.  Which is appropriate depends on the initial data, so we fold
neither into \cref{thm:endtoend}; the encoding is not an unknown black
box.  Finally, the Lorentzian may instead be carried
on a qubit register, the comparison of \cref{fig:cv}: truncating $\eta$ to
$[-\eta_{\max},\eta_{\max}]$ at spacing $\Delta\eta$ needs
$n_{\eta}=\lceil\log_{2}(2\eta_{\max}/\Delta\eta)\rceil$ qubits, as in the
QFT register of~\cite{TennieMagri2024}, with truncation error
$\Order(e^{-\eta_{\max}})$ and aliasing error $\Order(\Delta\eta)$, the slow
Lorentzian tail forcing $\eta_{\max}$ moderately large.

\begin{algorithm}
  \caption{Quantum simulation of \cref{eq:ode} via Fokker--Planck
           and CV-LCU Schr\"{o}dingerisation.}
  \label{alg:full}
  \begin{algorithmic}[1]
    \REQUIRE $\boldsymbol{\rho}(0)$, drift $\mathbf{f}$, diffusion $\sigma$,
             horizon $T$, Trotter step count $R$ meeting the target
             accuracy, grid size $N=2^{n}$.
    \ENSURE $|\boldsymbol{\rho}(T)\rangle$.
    \STATE Set $\Delta t:=T/R$.
    \STATE \textbf{Preprocessing.}  From the coefficients of $\mathbf{f}$,
           compute per axis and carry level $m$ the Walsh coefficients
           $\hat{f}_{m}(S)$, $|S|\le L$, of \cref{eq:Dm_monomials} and the
           boundary corrections of \cref{eq:A_split}; record the Cliffords
           $B_{m}$.
    \STATE \textbf{State preparation.}  Prepare
           $|\boldsymbol{\rho}(0)\rangle$~\cite{XieBenAmi2025,GonzalezConde2024}
           and the qumode kernel state of \cref{sec:qumode_prep}.
    \FOR{$r=1$ \TO $R$}
      \FOR{each family $S_{m,-}$ of $H_{2}$, then each $S_{m,+}$ of $H_{1}$}
        \STATE Apply $B_{m}$; apply the product
               \cref{eq:product_synthesis} of monomial-controlled phase
               rotations (Block~1) or momentum displacements (Block~2);
               apply $B_{m}^{\dagger}$.
      \ENDFOR
      \STATE Apply the boundary displacements of \cref{rem:boundary}.
    \ENDFOR
    \STATE \textbf{Post-processing.}  Inverse-QFT the qumode
           ($\eta\to\xi$); project onto $\xi>\lambda_{\max}(H_{1})T$
           (\cref{eq:shifted_recovery}; post-select or amplify); read out
           the qubit register.
  \end{algorithmic}
\end{algorithm}

\section{Complexity analysis}\label{sec:complexity}\label{sec:advantage}\label{sec:endtoend}

Throughout, $d$ is the spatial dimension, $N=2^{n}$ the grid points per
axis, $n_{\mathrm{tot}}=\sum_{k}n_{k}$ the total qubit count and
$N_{\mathrm{tot}}=2^{n_{\mathrm{tot}}}$ the total grid points; on the
isotropic grid $n_{k}=n$, so $n_{\mathrm{tot}}=dn$ and
$N_{\mathrm{tot}}=N^{d}$.  Further, $L$ is the drift degree, $T$ the
horizon, $\Delta t$ the Trotter step, and $N_{F}$ the Fock cutoff of
\cref{sec:qumode_prep}.  We keep $d$ explicit, the compression we claim
being a statement about it.

\paragraph{Gate count per Trotter step}

Fix an axis $k$ and a carry level $m\le n_{k}$.  By \cref{sec:synthesis} the
family exponential is an exact product of $\Order(n_{\mathrm{tot}}^{L})$
monomial-controlled displacements, each carrying at most
$L+2m$ controls and so costing $\Order(L+n_{k})$ gates under the
ancilla-ladder convention, with a suffix Clifford of
$\Order(n_{k}^{2}/\log n_{k})$~\cite{Aaronson2004}.  Three facts fix the
dimensional scaling: the families of distinct axes are disjoint
(\cref{thm:chromatic}), so they number $2n_{\mathrm{tot}}+d$, linear in the
total qubit count rather than $d$ times it; the suffix machinery, Clifford
and selection controls alike, never leaves one axis register, so it scales
with $n_{k}$; and only the prefix monomial count sees the full register, a
coupled drift of joint degree $L$ spreading its Walsh support over all
$n_{\mathrm{tot}}-m$ non-suffix qubits (\cref{rem:multidim_pauli}).  Summing
over families, parities, and the $\Order(1)$ boundary displacements of
\cref{rem:boundary},
\begin{equation}
  \text{gates per Trotter step}
  \;=\;\Order\!\Bigl(
      n_{\mathrm{tot}}^{L+1}\bigl(L+\textstyle\max_{k}n_{k}\bigr)
      \;+\;n_{\mathrm{tot}}\,\textstyle\max_{k}n_{k}^{2}\Bigr),
  \label{eq:gatespertrotter_gen}
\end{equation}
which on the isotropic grid is
\begin{equation}
  \boxed{\;
    \text{gates per Trotter step}
    \;=\;\Order\!\bigl(d^{L+1}n^{L+2}+d\,n^{3}\bigr)
    \;=\;\Order\!\bigl(d^{L+1}n^{L+2}\bigr)
    \quad (L\ge1),
  \;}
  \label{eq:gatespertrotter}
\end{equation}
equivalently $\Order(n_{\mathrm{tot}}^{L+2}/d)$, and polylogarithmic in
$N_{\mathrm{tot}}=2^{dn}$ at fixed degree and dimension.  A native
controlled displacement onto the two-dimensional suffix subspace would
remove the $\Order(L+n_{k})$ overhead and lower \cref{eq:gatespertrotter} to
$\Order(d^{L+1}n^{L+1})$.  The classical preprocessing is the closed-form
evaluation of the $\Order(n_{\mathrm{tot}}^{L})$ Walsh coefficients per
family, $\Order(d^{L+1}n^{L+1})$ arithmetic operations in total; no
$\Order(N_{\mathrm{tot}})$ assembly of $A$ is performed.

\paragraph{Number of Trotter steps}

For \cref{eq:trotter1} to reach global error $\epsilon$ the step count is
governed by $[H_{2}\otimes\Id,H_{1}\otimes\qhat]=[H_{2},H_{1}]\otimes\qhat$.
On the range of the kernel-state projector
$\|\qhat\|_{N_{F}}=\Order(\sqrt{N_{F}})$, so
with $\|H_{1}\|,\|H_{2}\|\le\|A\|$~\cite{Childs2021},
\begin{equation}
  R \;=\;\frac{T^{2}\,\|[H_{1},H_{2}]\|\,\|\qhat\|_{N_{F}}}{\epsilon}
   \;\le\;\frac{2T^{2}\|H_{1}\|\,\|H_{2}\|\,\|\qhat\|_{N_{F}}}{\epsilon}
   \;=\;\Order\!\left(
     \frac{T^{2}\|A\|^{2}\|\qhat\|_{N_{F}}}{\epsilon}\right),
  \label{eq:trottersteps}
\end{equation}
the rigorous state-dependent statement, including higher-order product
formulas, being \cite[Thm.~3]{DasEtAl2026}; a $2p$-th order
Suzuki--Trotter formula improves this to
$R=\Order\bigl(T^{1+1/(2p)}(\|A\|^{2}\|\qhat\|_{N_{F}})^{1/(2p)}
/\epsilon^{1/(2p)}\bigr)$.  The step count carries its own dimensional
factor through the stencil norm: $A$ is a sum of $d$ axis terms
\cref{eq:A_decomp} each of norm $\Order(\sigma/\Delta x^{2})$, so
$\|A\|=\Order(d\sigma/\Delta x^{2})$ and $R$ grows as $d^{2}$ at fixed grid.
This is the same $d^{2}$ a classical explicit solver pays per step, and is
not what the construction is meant to improve.

\paragraph{Total gate complexity}

Multiplying \cref{eq:gatespertrotter} by \cref{eq:trottersteps},
\begin{equation}
  \boxed{\;
    \text{total gates}
    \;=\;
    \Order\!\left(\frac{T^{2}\|A\|^{2}\,\|\qhat\|_{N_{F}}}{\epsilon}\,
                  d^{L+1}n^{L+2}\right),
    \qquad \|A\|=\Order\!\left(\frac{d\sigma}{\Delta x^{2}}\right).
  \;}
  \label{eq:totalcomplexity}
\end{equation}
The degree enters only through the per-step factor $d^{L+1}n^{L+2}$, the
price of the monomial synthesis, which is polylogarithmic in the grid size
whenever $n_{\mathrm{tot}}^{L+1}\ll2^{n_{\mathrm{tot}}}$, that is at any
fixed degree and moderate qubit count.  The two remaining factors are
the stencil norm and the post-selection overhead
$\Order\bigl(e^{\lambda_{\max}T}\|\boldsymbol{\rho}(0)\|/
\|\boldsymbol{\rho}(T)\|\bigr)$, which~\cite{AnLiuWangZhao2025} prove
unavoidable in the worst case for this operator class; both are examined
against the classical baseline in \cref{sec:advantage}.

\paragraph{Classical baseline and the regime of advantage}

The algorithm returns the density $\boldsymbol{\rho}(T)$ of \cref{eq:FP},
and its natural classical baseline is a grid Fokker--Planck
solver~\cite{Holubec2019} advancing the same discretised density under
$e^{A\Delta t}$.  That baseline carries an $\Order(N^{d})$ per-step
state-space cost, exponential in the dimension, and is \emph{stable in $T$}:
$A$ is a Markov generator with $\alpha(A)\le0$, $e^{At}$ is an $L^{1}$
contraction, and no exponential-in-$T$ growth appears.  Against it the
algorithm makes a definite trade, replacing the $\Order(N^{d})$ per-step
cost by the $\Order(d^{L+1}n^{L+2})=\Order\bigl(d^{L+1}(\log N)^{L+2}\bigr)$
gates of \cref{eq:gatespertrotter}, polynomial in $d$ and polylogarithmic in
the grid points per axis, while \emph{reintroducing} the
$e^{\lambda_{\max}T}$ post-selection factor that the classical density
solver does not carry.  The advantage we claim is this per-step compression
of the state space, not the horizon dependence.

This claim is distinct from the more familiar one that a long-time
\emph{trajectory} is hard.  Under the standard a priori analysis an explicit
order-$p$ integrator does carry a step count growing as $e^{\mu_{2}(J)T/p}$,
but the sharp constant is the one-sided Lipschitz constant
$\mu_{2}(J)=\lambda_{\max}\bigl((J+J^{\top})/2\bigr)$, the logarithmic norm
of the Jacobian $J=D\mathbf{f}$, which is $\le0$ for a dissipative flow, and
B-stable implicit methods remove the exponential
altogether~\cite{HairerWanner1996}.  Our post-selection factor
$e^{\lambda_{\max}(H_{1})T}$ is likewise exponential in $T$, with
$\lambda_{\max}(H_{1})$ the logarithmic norm $\mu_{2}(A)$ of the
\emph{generator} (\cref{sec:recovery_domain}).  The two rate constants are
governed by opposite signs of the divergence
$\nabla\!\cdot\!\mathbf{f}=\operatorname{tr}J$: in one dimension the
classical constant is $\max_{x}f'(x)/p$, largest where the flow expands,
while ours is $\approx-\tfrac12\min_{x}f'(x)$, largest where the flow
compresses, near the attractors where classical integrators are best
conditioned.  The classical rate also carries the $1/p$ that a higher-order
method shrinks, whereas $e^{\lambda_{\max}T}$ admits no such knob.  An
exponential-in-$T$ classical bound therefore does not, on its own, separate
the two routes; the separation we claim is the per-step one above.

Stiffness sharpens the same conclusion.  The positivity condition
$\Delta x\le2\sigma/F$, with $F:=\max_{j}|f_{j}|$, that secures
$\alpha(A)\le0$ fixes the coarsest admissible grid at $\Delta x=2\sigma/F$,
where
\begin{equation}
  \|A\|\;\approx\;\frac{2\sigma}{\Delta x^{2}}\;=\;\frac{F^{2}}{2\sigma},
  \qquad
  R\;\propto\;T^{2}\|A\|^{2}\;=\;\frac{T^{2}F^{4}}{4\sigma^{2}}.
  \label{eq:stiffness}
\end{equation}
The step count grows as the fourth power of the stiffness ratio $F$, and $T$
must be long enough to resolve the slow manifold, so the product $TF$ is
large by construction, whereas a classical implicit solver steps on the slow
timescale and is largely indifferent to $F$.  The method is therefore least
competitive in the stiff, timescale-separated regime; we do not claim
advantage there, and a stiff benchmark on which it is competitive is left to
future work.

\paragraph{End-to-end complexity}

Two accuracy parameters must be kept distinct.  The \emph{algorithmic}
accuracy $\epsilon$ measures how well the output approximates
$\boldsymbol{\rho}(T)=e^{AT}\boldsymbol{\rho}(0)$ in relative $L^{2}$ over
the physical interior; the diffusion $\sigma$ is a separate \emph{modelling}
parameter, and the bias between $\boldsymbol{\rho}(T)$ and a deterministic
observable of \cref{eq:ode} is $\Order(\sigma)$ (\cref{rem:recovery}).  The
five contributions and their knobs are:

\begin{center}
\small
\begin{tabular}{@{}lll@{}}
  \toprule
  error source & knob & set to reach $\epsilon$ \\
  \midrule
  spatial discretisation, $\Order(\Delta x^{2})$ & $\Delta x$
    & $\Delta x=\Order(\sqrt{\epsilon})$, capped by $\Delta x\le2\sigma/F$ \\
  outer Trotter \cref{eq:trotter1} & $R$
    & $R=\Order(T^{2}\|A\|^{2}\|\qhat\|_{N_{F}}/\epsilon)$ \\
  inner Trotter \cref{eq:trotter_H2} & $R$
    & $\Order(n_{\mathrm{tot}}^{2})$ family pairs, folded into $R$ \\
  kernel truncation \cref{eq:fock_kernel} & $N_{F}$
    & $N_{F}=\epsilon^{-o(1)}$ or $\epsilon^{-2}$ (smoothed / Lorentzian) \\
  recovery amplitude \cref{eq:psucc} & AA rounds
    & $\Order(e^{\lambda_{\max}T}\|\boldsymbol{\rho}(0)\|/\|\boldsymbol{\rho}(T)\|)$ \\
  \bottomrule
\end{tabular}
\end{center}

\noindent
The scaling is fixed by the stencil norm being dominated by the diffusion
diagonal, $\|A\|\approx2\sigma/\Delta x^{2}$: resolving the density to
relative $L^{2}$ accuracy $\epsilon$ forces $\Delta x=\Order(\sqrt{\epsilon})$
at fixed $\sigma$, while a deterministic target forces
$\sigma=\Theta(\epsilon)$ at the positivity grid $\Delta x=2\sigma/F$, and
either way $\|A\|=\Theta(1/\epsilon)$.  This is a property of the explicit
stencil, not of the product formula.

\begin{theorem}[End-to-end cost]
  \label{thm:endtoend}
  Let $\mathbf{f}$ be a polynomial drift of degree $L$ on a box, discretised
  on the grid of \cref{lem:affine} with $N=2^{n}$ points and diffusion
  $\sigma>0$ satisfying $\Delta x\le2\sigma/F$, so $A$ is semi-stable.
  Suppose $\boldsymbol{\rho}_{\sigma}(\cdot,t)$ has bounded derivatives up to
  fourth order on $[0,T]$, and adopt \cite[Thms.~2--4]{DasEtAl2026}.  Then
  for any $\epsilon\in(0,1)$ in relative $L^{2}$ over the physical interior,
  \cref{alg:full} with the smoothed kernel at cutoff $N_{F}(\epsilon)$,
  first-order Trotterisation with $R(\epsilon)$ steps, and terminal
  amplitude amplification returns a state $\epsilon$-close to
  $\boldsymbol{\rho}(T)$ using
  \begin{equation}
    \boxed{\;
    N_{\mathrm{gate}}
    \;=\;\widetilde{\Order}\!\left(
      \frac{T^{2}\,\|A\|^{2}\,\sqrt{N_{F}}}{\epsilon}\;d^{L+1}n^{L+2}
    \right)\times
    \Order\!\left(e^{\lambda_{\max}T}\,
      \frac{\|\boldsymbol{\rho}(0)\|}{\|\boldsymbol{\rho}(T)\|}\right)
    \;}
    \label{eq:endtoend}
  \end{equation}
  gate applications, where $\widetilde{\Order}$ hides factors
  polylogarithmic in $N_{\mathrm{tot}}$ and $1/\epsilon$, including the
  $n_{\mathrm{tot}}^{2}$ inner-family factor of \cref{eq:trotter_H2}, and the
  smoothed kernel gives $\sqrt{N_{F}}=\epsilon^{-o(1)}$, against
  $\Theta(\epsilon^{-1})$ for the raw Lorentzian.
\end{theorem}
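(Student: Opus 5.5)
The plan is to assemble the end-to-end count as a composition of error contributions, split via the triangle inequality, and then a product of per-step cost, step count and amplification rounds. Write $\boldsymbol{\rho}_{\mathrm{out}}$ for the normalised output and $\boldsymbol{\rho}(T)=e^{AT}\boldsymbol{\rho}(0)$ for the target. I would budget $\epsilon/4$ to each of four error sources, measured in relative $L^{2}$ on the physical interior:
\begin{itemize}
\item[(a)] the outer Trotter splitting \cref{eq:trotter1};
\item[(b)] the inner family splitting \cref{eq:trotter_H2} together with its $H_{1}$ analogue;
\item[(c)] kernel truncation at cutoff $N_{F}$;
\item[(d)] the recovery/post-selection step.
\end{itemize}
The spatial discretisation error $\Order(\Delta x^{2})$ is not charged here. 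The theorem targets $e^{AT}\boldsymbol{\rho}(0)$ itself, and the fourth-derivative hypothesis only enters if one also compares with $\rho_{\sigma}$. So I would note that, under that hypothesis, the standard consistency-plus-stability argument gives an $\Order(\Delta x^{2})$ discretisation error. Stability comes from $e^{At}$ being an $L^{1}$ contraction, and on the fixed grid an $L^{2}$ bound follows from \cref{thm:abscissa}. This term is then absorbed by choosing $\Delta x=\Order(\sqrt{\epsilon})$, as in the table.

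For (a) and (b) I would invoke \cite[Thm.~3]{DasEtAl2026} on the range of $\Pi_{N_{F}}$. The commutator there is $[H_{1},H_{2}]\otimes\qhat$, with $\|\Pi_{N_{F}}\qhat\Pi_{N_{F}}\|=\Order(\sqrt{N_{F}})$. This gives the step count \cref{eq:trottersteps}, $R=\Order(T^{2}\|A\|^{2}\sqrt{N_{F}}/\epsilon)$. The inner family splittings contribute $\Order(n_{\mathrm{tot}}^{2})$ commutator pairs, each bounded by $\|A\|^{2}$, which multiplies $R$ by a factor hidden in $\widetilde{\Order}$.

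Each family exponential is synthesised exactly by \cref{eq:product_synthesis}, so no further error enters there. The per-step gate count is \cref{eq:gatespertrotter}, including the boundary displacements of \cref{rem:boundary}. Multiplying by $R$ gives the first factor of \cref{eq:endtoend}.

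For (c), I would use the smoothed kernel $h=e^{-\xi}\chi$. Its Fourier transform $\hat h$ is Schwartz, so \cite[Thm.~2(i)]{DasEtAl2026} gives truncation error $\Order(N_{F}^{-s/2})$ for every $s$. Hence $N_{F}=\epsilon^{-o(1)}$ suffices, and $\sqrt{N_{F}}=\epsilon^{-o(1)}$. For the raw Lorentzian only $s=1$ is available, which forces $N_{F}=\Theta(\epsilon^{-2})$ and $\sqrt{N_{F}}=\Theta(\epsilon^{-1})$. By \cite[Thm.~4]{DasEtAl2026} the kernel error perturbs the post-selected state linearly, so it stays within budget.

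For (d), \cref{eq:shifted_recovery} with $\xi^{*}$ slightly above $\lambda_{\max}T$ gives an exact recovery identity: the domain-of-dependence argument makes the smoothing on $\xi<0$ invisible. The success amplitude is then $e^{-\xi^{*}}\|\boldsymbol{\rho}(T)\|/\|\boldsymbol{\rho}(0)\|$ times a kernel normalisation constant. Amplitude amplification therefore needs $\Order(e^{\lambda_{\max}T}\|\boldsymbol{\rho}(0)\|/\|\boldsymbol{\rho}(T)\|)$ rounds, each rerunning the circuit. \Cref{thm:abscissa} is what makes $\lambda_{\max}$ finite and independent of $\Delta x$.

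The main obstacle is converting the unnormalised errors into a relative error after post-selection. An absolute error $\delta$ in the full state becomes a relative error of order $\delta$ divided by the success amplitude. The Trotter and kernel budgets must therefore be tightened by that amplitude, and this costs only logarithmic or constant factors only if they enter polynomially in $\epsilon$. I would first try folding this rescaling into the amplification factor. The cleanest route is to run amplitude amplification on the already-approximated circuit and use the fact that it preserves the distance to the ideal post-selected state up to the amplification count. If that fails, I would state the result with $\epsilon$ replaced by $\epsilon\,e^{-\lambda_{\max}T}\|\boldsymbol{\rho}(T)\|/\|\boldsymbol{\rho}(0)\|$ inside $R$.
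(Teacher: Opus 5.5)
Your skeleton is the paper's: one budget row per error source, outer and inner Trotter counts from \cite[Thm.~3]{DasEtAl2026} with $\|\qhat\|_{N_{F}}=\Order(\sqrt{N_{F}})$, the cutoff from \cite[Thm.~2]{DasEtAl2026}, and then $R$ times \cref{eq:gatespertrotter} times the \cref{eq:aa_cost} rounds. The gap is the step you leave open, and the paper does not close it either.

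\textbf{The post-selection step.} The paper's only remark here is that the errors ``compose linearly into the post-selection probability'' by \cite[Thm.~4]{DasEtAl2026}. That controls $p_{\mathrm{succ}}$ additively. It does not control the relative error of the post-selected state, which is what the theorem asserts. Your diagnosis is right: an absolute error $\delta$ before selection becomes a relative error up to $2\delta/\sqrt{p_{\mathrm{succ}}}$. Here $\sqrt{p_{\mathrm{succ}}}\approx e^{-\xi^{*}}\|\boldsymbol{\rho}(T)\|/\|\boldsymbol{\rho}(0)\|$, up to the kernel normalisation.

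Your first route does not escape this. Amplitude amplification uses $k\approx\pi/(4\sqrt{p_{\mathrm{succ}}})$ Grover iterates, each calling the approximate circuit and its inverse. These accumulate error of about $(2k+1)\delta$, which is again $\Order(\delta/\sqrt{p_{\mathrm{succ}}})$. So the per-run budget must be $\delta\approx\epsilon\sqrt{p_{\mathrm{succ}}}$ whichever way you order the steps.

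First-order Trotterisation has $R\propto1/\delta$. Hence $R$ picks up the factor $e^{\lambda_{\max}T}\|\boldsymbol{\rho}(0)\|/\|\boldsymbol{\rho}(T)\|$, which $\widetilde{\Order}$ does not hide, and the total gate count carries that factor squared. Your fallback is therefore what either argument actually proves. The kernel cutoff must likewise be set at precision $\epsilon\sqrt{p_{\mathrm{succ}}}$, which the smoothed kernel absorbs at subpolynomial cost. To approach the stated single power you need either a weaker precision dependence, or an error measure not normalised by the good-branch amplitude. For example, a $2q$-th order formula pays only the $1/(2q)$-th power of the tightening.

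\textbf{Step (b) for the $H_{1}$ families.} Step (b) is also wrong for the $H_{1}$ analogue of \cref{eq:trotter_H2}, in your proposal and in the paper's proof alike. Every $H_{1}$ family carries $\qhat$, so
$[H_{S_{m,+}}\otimes\qhat,H_{S_{m',+}}\otimes\qhat]=[H_{S_{m,+}},H_{S_{m',+}}]\otimes\qhat^{2}$.
These pairs are therefore weighted by $\|\qhat\|_{N_{F}}^{2}=\Order(N_{F})$, not bounded by $\|A\|^{2}$ alone. The per-step error of that splitting is $\Order(\Delta t^{2}n_{\mathrm{tot}}^{2}\|A\|^{2}N_{F})$, so $\sqrt{N_{F}}$ becomes $N_{F}$ in the count. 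This is harmless for the $\epsilon^{-o(1)}$ claim with the smoothed kernel. It does, however, turn the raw-Lorentzian comparison into $\Theta(\epsilon^{-2})$.

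\textbf{The discretisation remark.} Your remark that the discretisation row is moot is fair, since the stated target is $e^{AT}\boldsymbol{\rho}(0)$. If you keep that row, note that \cref{thm:abscissa} gives grid-uniform $L^{2}$ stability only under its extra hypotheses, notably $4M\Delta x^{2}\le\sigma$, which positivity does not imply. It is also proved only in one dimension.
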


\begin{proof}
  Set each row of the budget table to contribute $\Order(\epsilon)$.  The
  positivity condition caps $\Delta x$, whose $\Order(\Delta x^{2})$ error is
  then $\Order(\epsilon)$ under the fourth-derivative bound.  The outer and
  inner Trotter errors are both $\Order(\Delta t^{2}\Gamma)$ per step with
  $\Gamma=\Order(\|A\|^{2}(\|\qhat\|_{N_{F}}+n_{\mathrm{tot}}^{2}))$, since
  $\|[H_{1},H_{2}]\|\le2\|A\|^{2}$ and each of the
  $\Order(n_{\mathrm{tot}}^{2})$ family commutators obeys
  $\|[H_{S_{m}},H_{S_{m'}}]\|=\Order(\|A\|^{2})$; setting the accumulated
  $\Order(T^{2}\Gamma/R)$ to $\epsilon$ gives
  $R=\widetilde{\Order}(T^{2}\|A\|^{2}\sqrt{N_{F}}/\epsilon)$.  The cutoff is
  fixed by \cite[Thm.~2]{DasEtAl2026} and the errors compose linearly into
  the post-selection probability by \cite[Thm.~4]{DasEtAl2026}.  Multiplying
  $R$ by \cref{eq:gatespertrotter} and by the \cref{eq:aa_cost} rounds gives
  \cref{eq:endtoend}.
\end{proof}

\begin{corollary}[Cost for a deterministic observable]
  \label{cor:deterministic}
  If the target is an observable of \cref{eq:ode} to accuracy $\epsilon$,
  the small-noise bias forces $\sigma=\Theta(\epsilon)$; at the positivity
  grid $\|A\|=\Theta(dF^{2}/\epsilon)$ and $n=\Order(\log(F/\epsilon))$, so
  \cref{eq:endtoend} becomes
  \begin{equation}
    N_{\mathrm{gate}}
    \;=\;\widetilde{\Order}\!\big(T^{2}\,d^{L+3}\,F^{4}\,\epsilon^{-3}\big)
      \times e^{\lambda_{\max}T}
    \label{eq:eps3}
  \end{equation}
  with the smoothed kernel, degrading to $\epsilon^{-4}$ with the
  Lorentzian, and polylogarithmic in all remaining parameters.
\end{corollary}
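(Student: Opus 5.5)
The plan is to substitute the modelling choices into \cref{thm:endtoend} and collect powers of $\epsilon$. First I fix $\sigma$. By \cref{rem:recovery} the small-noise bias between $\boldsymbol{\rho}_{\sigma}(T)$ and a deterministic observable of \cref{eq:ode} is $\Order(\sigma)$. Allotting half the budget to this bias forces $\sigma=\Theta(\epsilon)$; the other half goes to the algorithmic accuracy of \cref{thm:endtoend}.

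Next I choose the grid at the positivity boundary, $\Delta x=2\sigma/F=\Theta(\epsilon/F)$, the coarsest grid allowed by the hypothesis of \cref{thm:endtoend}. Then $N\approx(x_{\max}-x_{\min})/\Delta x$ gives $n=\log_{2}N=\Order(\log(F/\epsilon))$. The stencil norm, a sum of $d$ axis terms each $\Order(\sigma/\Delta x^{2})$, becomes $\|A\|=\Order(d\sigma/\Delta x^{2})=\Theta(dF^{2}/\sigma)=\Theta(dF^{2}/\epsilon)$, matching \cref{eq:stiffness} up to the factor $d$.

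I then need to check the grid error. The $\Order(\Delta x^{2})=\Order(\epsilon^{2}/F^{2})$ discretisation error is already below $\epsilon$, provided the fourth-derivative hypothesis of \cref{thm:endtoend} holds with constants that do not blow up as $\sigma\to0$. This is the step I expect to be the main obstacle. As $\sigma$ shrinks, $\boldsymbol{\rho}_{\sigma}$ concentrates on a width of order $\sqrt{\sigma}$, so its derivatives grow like negative powers of $\sigma$. I would handle this in the same way the theorem does: adopt the bounded-derivative assumption as stated, uniformly in $\sigma$, or else absorb any mild $\sigma$-dependence into the $\widetilde{\Order}$. The statement's ``polylogarithmic in all remaining parameters'' indicates this is the intended reading.

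Finally I substitute into \cref{eq:endtoend}. The factor $T^{2}\|A\|^{2}/\epsilon$ becomes $T^{2}d^{2}F^{4}\epsilon^{-3}$. The per-step factor $d^{L+1}n^{L+2}$ becomes $d^{L+1}$ times $\operatorname{polylog}(F/\epsilon)$, and the polylog is hidden in $\widetilde{\Order}$. With the smoothed kernel, $\sqrt{N_{F}}=\epsilon^{-o(1)}$, which is absorbed as well. Multiplying gives $T^{2}d^{L+3}F^{4}\epsilon^{-3}$, times the amplification factor $e^{\lambda_{\max}T}$. The ratio $\|\boldsymbol{\rho}(0)\|/\|\boldsymbol{\rho}(T)\|$ is treated as an instance constant, since $\mathbf{1}^{T}A=0$ preserves mass. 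This yields \cref{eq:eps3}. For the Lorentzian kernel, $\sqrt{N_{F}}=\Theta(\epsilon^{-1})$ by \cref{thm:endtoend}, which contributes one more power of $\epsilon^{-1}$ and gives $\epsilon^{-4}$.
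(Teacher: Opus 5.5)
Your proposal follows the paper's own argument: the corollary comes from substituting $\sigma=\Theta(\epsilon)$ and the positivity grid $\Delta x=2\sigma/F$ (so $\|A\|=\Theta(dF^{2}/\epsilon)$ and $n=\Order(\log(F/\epsilon))$) into \cref{thm:endtoend}, with the fourth-derivative bound taken uniformly in $\sigma$ and with $\sqrt{N_{F}}=\epsilon^{-o(1)}$ for the smoothed kernel versus $\Theta(\epsilon^{-1})$ for the Lorentzian. Two of your side justifications are wrong and should be dropped. First, $\mathbf{1}^{T}A=0$ conserves $L^{1}$ mass, not the $L^{2}$ norms in $\|\boldsymbol{\rho}(0)\|/\|\boldsymbol{\rho}(T)\|$; positivity and Cauchy--Schwarz bound that ratio only by $\sqrt{N_{\mathrm{tot}}}$, so, as the paper does tacitly, you are simply assuming it is an instance constant. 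Second, a $\sigma$-dependent derivative bound cannot be absorbed into $\widetilde{\Order}$, because it enters the error budget rather than the gate count and can force a grid finer than $2\sigma/F$, which changes $\|A\|$ and hence the power of $\epsilon$; only your first reading, uniform in $\sigma$ and the same as the paper's, gives the stated result.
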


The dimension enters \cref{eq:eps3} polynomially at fixed $L$, where the
classical grid solver pays $N^{d}=\epsilon^{-\Theta(d)}$; that gap is the
claim, and it holds at fixed $L$ only, the factor $d^{L+3}$ degrading as the
degree grows.  The $\epsilon^{-3}$ is polynomially worse than the
$\operatorname{polylog}(1/\epsilon)$ of high-precision linear
solvers~\cite{Childs2021,Krovi2023,AnChildsLin2023}, for two reasons: the
diffusion-limited $\|A\|=\Theta(1/\epsilon)$, intrinsic to the explicit
stencil, and the $1/\epsilon$ of first-order Trotterisation.  A $2p$-th
order formula~\cite{Tranter2019} improves the explicit $\epsilon$ and $T$
powers but leaves the $\|A\|$ coupling untouched, so the accuracy dependence
stays polynomial.  What the method buys in exchange is the oracle-free,
closed-form compilation of \cref{sec:implementation}.

\section{Numerical validation}\label{sec:numerics}\label{sec:num_fp}\label{sec:num_pauli}\label{sec:num_cv}\label{sec:num_e2e}

We validate by exact classical simulation of the underlying mathematics: the
qumode is represented on a fine $\eta$ grid, the standard classical proxy for
a continuous-variable mode, and each $e^{-i\mathcal{H}(\eta)t}$ is applied
exactly, isolating the algebra of the algorithm from Trotter and hardware
error.  The $\eta$ grid must resolve a phase winding at rate
$2\pi\|A\|T$, so its cost grows with the stencil norm; at $n=6$ we take
$\eta\in[-10,10]$ with $1.6\times10^{4}$ points.  This is a limitation of
the proxy rather than of the algorithm, and it bites hardest in exactly the
regime of interest.

The scope is narrow.  These runs test the structural results, the
product synthesis, the shifted recovery, the end-to-end accuracy, and the
qumode-versus-register comparison of \cref{sec:cv_lcu}.  They do not measure
the gate counts of \cref{sec:complexity}, as the mode Hamiltonians are applied
exactly rather than compiled. They do not simulate the split \cref{eq:trotter1},
whose error is left to \cref{eq:trottersteps}, so the accuracies reported are
those of the mathematics only.

The benchmark is the bistable gradient flow $\dot x=x-x^{3}$ (degree $L=3$)
on $[-0.5,1.5]$ with $\sigma=0.08$ and $T=2$, on $n=6$ qubits ($N=64$), from
a discretised Gaussian $\boldsymbol{\rho}(0)$ centred at $x_{0}=0.6$ of width
$w_{0}=0.12$.  Both $\sigma$ and $w_{0}$ are constrained from two sides:
small enough to keep the density concentrated, so that the small-noise
picture of \cref{sec:num_fp} applies, and large enough for the positivity
condition $\Delta x\le2\sigma/\max_{j}|f_{j}|$ of \cref{sec:cd}
($0.032\le0.085$) and for $\boldsymbol{\rho}(0)$ to be resolved, the chosen
width spanning about four cells per standard deviation.  All draws use a fixed seed, and the
runs are on the CPU of an M3 MacBook Air.  Where a second drift sharpens a point we quote
a comparison run on the logistic law $\dot x=a\,x(1-x)$, $a=1.5$ (degree
$L=2$) on $[-0.2,1.2]$ with $\sigma=0.01$, $T=1.5$, whose behaviour is
identical in kind.

\paragraph{Fokker--Planck linearisation}

With the conservative stencil of \cref{sec:cd} every column of $A$ sums to
zero to machine precision, so the discrete evolution preserves probability
exactly; the naive stencil, leaving the boundary diagonals at
$-2\sigma/\Delta x^{2}$, leaks at both walls, its column sums departing from
zero by $\Order(10)$ against $7.1\times10^{-15}$ for the conservative form,
and its mass falling to $0.966$ by $t=T$.  An
independent Euler--Maruyama integration of \cref{eq:sde} reproduces
$e^{At}\boldsymbol{\rho}_{0}$ (\cref{fig:fp}(a)), confirming $A$ as the
correct generator, the grid-free mean of $x$ agreeing to a relative
$1.0\times10^{-3}$.

The trajectory is read back out here.  With weak diffusion and concentrated
$\boldsymbol{\rho}(0)$ the density peak tracks the deterministic solution and
settles on the fixed point $x=1$ (\cref{fig:fp}(b)), reaching $0.988$
against the deterministic $x(T)=0.984$ and agreeing to $0.044$ in $x$ over
the whole interval, or $2\%$ of the domain width; peaks are located by parabolic
fit through the maximal cell and its two neighbours.

\begin{figure}[tbp]
  \centering
  \includegraphics[width=0.88\textwidth]{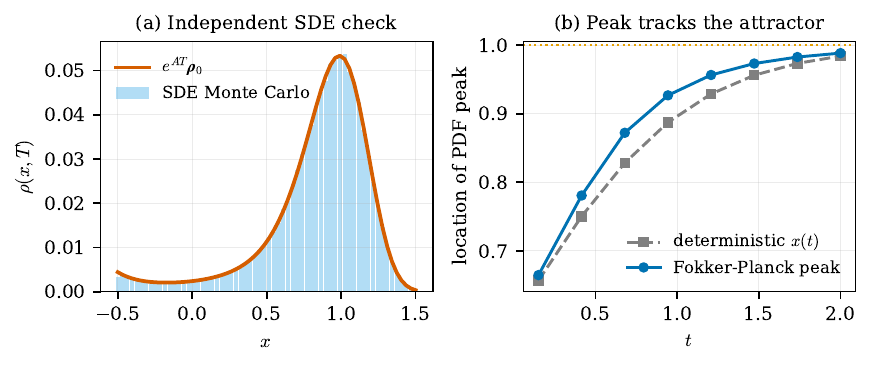}
  \caption{Fokker--Planck linearisation ($\dot x=x-x^{3}$).
    (a)~An independent Monte-Carlo integration of the stochastic
    differential equation matches $e^{AT}\boldsymbol{\rho}_{0}$.
    (b)~The density peak, located by parabolic sub-cell interpolation,
    follows the deterministic trajectory onto the attractor to within
    $0.044$ in $x$.}
  \label{fig:fp}
\end{figure}

\paragraph{Structural results}

The exhaustive Pauli decompositions of $H_{1}$ and $H_{2}$ confirm the
structural results of \cref{sec:pauli}.  Every non-zero string has the
bipartite $\{I,Z\}$-prefix\,/\,$\{X,Y\}$-suffix form of
\cref{thm:pauli_structure}; the strings partition into exactly $2n+1=13$
commuting families, $H_{1}$ populating the even-$Y$ families and $H_{2}$ the
odd-$Y$ ones as \cref{thm:chromatic} requires.  The per-family term counts
respect \cref{eq:term_count}, totalling $347$ for the generic degree-three
drift, whereas the logistic comparison run, on a grid centred on that
drift's axis of symmetry $x=1/2$, annihilates further Walsh coefficients and
falls strictly sparser at $226$.  Summed over families
the term count grows as $\Theta(2^{n})=\Theta(N)$, the suffix
multiplicities restoring an exponential scaling in the string count, which
is why \cref{sec:synthesis} pays the $\Order(n^{L})$ prefix monomial count
instead.  Finally, \cref{eq:product_synthesis} is verified operator-wise:
the product of multi-controlled displacements reproduces
$e^{-iH_{S_{m,+}}\otimes\qhat\Delta t}$ to better than $10^{-13}$, at
$\Order(n^{L})$ commuting factors.

\paragraph{Continuous-variable versus discretised LCU}

A digital alternative encodes $\eta$ on an $n_{\eta}=\log_{2}M$-qubit
register of range $\eta_{\max}$ and spacing $\Delta\eta=2\eta_{\max}/(M-1)$,
reproducing the qumode only as $\eta_{\max}\to\infty$, $\Delta\eta\to0$.
\Cref{fig:cv} measures its recovery error against the continuum qumode under
refinement: each curve fixes $\eta_{\max}$ and adds qubits, the error
falling until it meets a \emph{truncation floor} set by $\eta_{\max}$, below
which resolution buys nothing, while a larger range lowers the floor only at
the cost of more qubits to reach it, the slowly-decaying Lorentzian tail
forcing a wide range.  The qumode sits at the joint limit with no register
at all, attaining an accuracy the digital encoding reaches only by enlarging
range and resolution together.  This is an accuracy-per-resource advantage
of the representation, not an asymptotic gate-count separation, and it is
the numerical content of the qumode--register comparison.

\begin{figure}[tbp]
  \centering
  \includegraphics[width=0.56\textwidth]{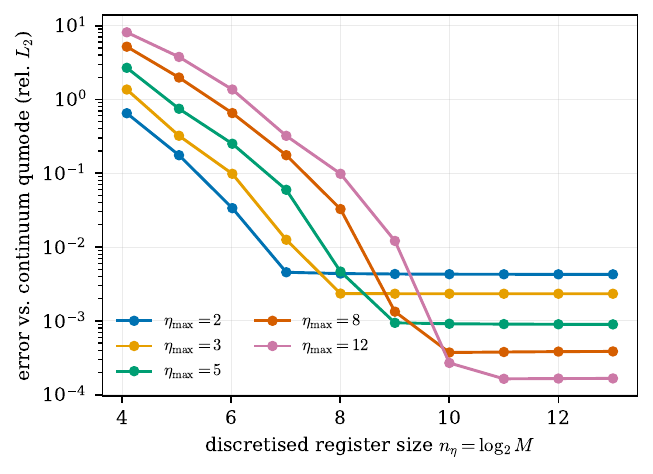}
  \caption{A discretised (qubit-register) LCU approaching the
    continuous-variable qumode limit.  Each curve fixes the register range
    $\eta_{\max}$ and refines its resolution; the error plateaus at a
    truncation floor set by $\eta_{\max}$.  The qumode realises the joint
    limit $\eta_{\max}\to\infty$, $\Delta\eta\to0$ with no register.}
  \label{fig:cv}
\end{figure}

\paragraph{End-to-end accuracy}

The full pipeline returns $\boldsymbol{\rho}(T)$ to a relative interior error
of $2.5\times10^{-4}$ against $e^{AT}\boldsymbol{\rho}_{0}$
(\cref{fig:e2e}(a)).  Both abscissas were computed
directly: $\lambda_{\max}(H_{1})=+0.725$, comfortably inside
\cref{thm:abscissa}, whose interior term alone is $2.88$, while $\alpha(A)$
vanishes to machine precision,
$|\alpha(A)|<10^{-13}$.  The generator is therefore semi-stable with
indefinite Hermitian part and the $L^{2}$ norm grows transiently,
$\max_{t\le T}\|e^{At}\|_{2}=1.28$.  The recovery error against $\xi^{*}$
falls sharply at $\xi^{*}=\lambda_{\max}T$ (\cref{fig:e2e}(c)): at
$\lambda_{\max}T=1.45$ it is $0.59$ at $\xi^{*}=1.0$ and $0.15$ at
$\xi^{*}=1.4$, then $5.0\times10^{-4}$ at $\xi^{*}=1.6$ and below $10^{-3}$
thereafter, the logistic run behaving the same way about its own
$\lambda_{\max}T=2.41$.  Recovery below $\lambda_{\max}t$ fails and above it
succeeds, as \cref{sec:recovery_domain} requires, so we take
\begin{equation}
  \xi^{*}(t)\;=\;\lambda_{\max}(H_{1})\,t+\tfrac12
  \label{eq:xistar_rule}
\end{equation}
throughout: this needs no reference solution, $\lambda_{\max}(H_{1})$ being
read off the discretisation, and sits clear of the certified boundary at a
cost of $e^{-1/2}$ in amplitude, giving $\xi^{*}=1.95$ here.  The error
stays below $10^{-3}$ over the whole interval $t\in(0,T]$
(\cref{fig:e2e}(b)), the mild growth at the largest $t$ tracing to the
boundary cell where the positivity condition is marginal; the physical
interior is unaffected, and the recovered peak coincides with the
deterministic attractor throughout.

\begin{figure}[tbp]
  \centering
  \includegraphics[width=\textwidth]{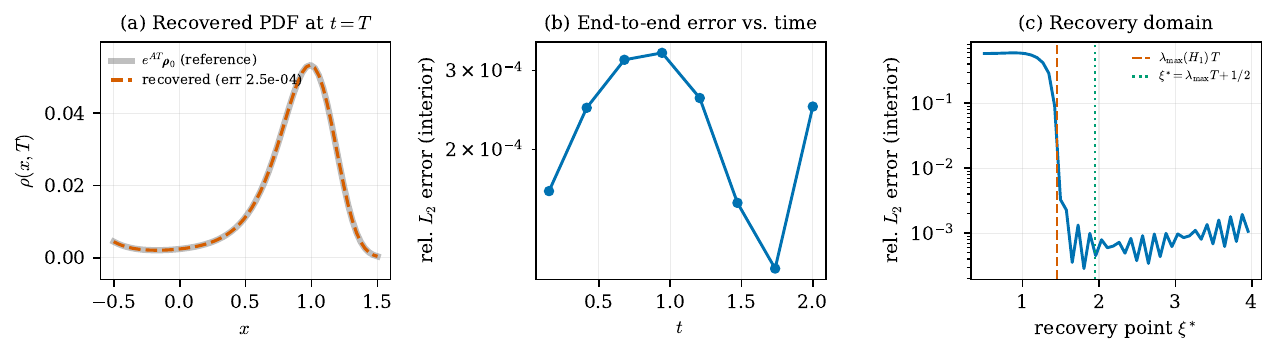}
  \caption{End-to-end validation ($\dot x=x-x^{3}$).
    (a)~The recovered density matches the reference $e^{AT}\boldsymbol{\rho}_{0}$.
    (b)~The relative interior error stays below $10^{-3}$
    as a function of the simulation time $t$.
    (c)~Recovery error against the recovery point $\xi^{*}$, with the
    certified onset $\xi^{*}=\lambda_{\max}(H_{1})T$ marked; the error
    falls by three orders of magnitude there.}
  \label{fig:e2e}
\end{figure}

\section{Discussion}\label{sec:discussion}\label{sec:trotter_qumode}

\paragraph{Advantages of the CV-LCU approach}

The continuous-variable coupling offers three benefits over fully digital
Schr\"{o}dingerisation~\cite{TennieMagri2024,JinLiu2022,JinLiuYuFP2024}: the
$\eta$-spectrum is represented natively, where a register introduces the
aliasing and truncation of \cref{fig:cv}; many platforms support
$H_{1}\otimes\qhat$ as an always-on Jaynes--Cummings or dispersive
interaction~\cite{JinLiuJC2024}; and the whole continuum evolves in one
circuit, removing the per-mode sampling overhead.  These concern the
register, not the mode's own resources.  The qumode's resource is the
prepared kernel state, whose cutoff $N_{F}$, equivalently stellar rank
$N_{F}-1$~\cite{DasEtAl2026}, controls the kernel accuracy and whose
truncated norm $\|\qhat\|_{N_{F}}=\Order(\sqrt{N_{F}})$ enters the Trotter
count.  What the continuum removes is the register: the digital alternative
needs $n_{\eta}$ ancilla qubits, QFT and controlled-phase machinery on them,
and the range--resolution trade-off of \cref{fig:cv}, where the qumode is a
single native mode with no ancilla; the same elimination is explicit in the
hybrid LCHS setting of~\cite{DasEtAl2026}.  In the discrete analysis
of~\cite{JinLiu2022} the mode register contributes
$\|\boldsymbol{D}\|_{\max}=\Order(1/\epsilon)$ to
$\|H_{\mathrm{total}}\|_{\max}$; that factor does not vanish in the
continuum limit but reappears as $\|\qhat\|_{N_{F}}$, paid in Trotter steps
on one mode rather than in qubits, phases, and QFT depth on a register.

\paragraph{Trotter error and higher-order schemes}

The Trotter error for \cref{eq:Htotal} cannot be bounded through operator
norms alone: for an ideal continuum mode $\|\qhat\|=\infty$, and for the
ideal Lorentzian $\langle\qhat^{2}\rangle$ converges while
$\langle\qhat^{4}\rangle$ diverges, so any bound needing high moments of the
ideal state fails.  A state-dependent bound is supplied by
\cite[Thm.~3]{DasEtAl2026}, which treats exactly the class
$\qhat\otimes L+\Id\otimes H$: the prepared state \cref{eq:fock_kernel} lies
in the range of $\Pi_{N_{F}}$, the evolution may be analysed with
$\|\qhat\|_{N_{F}}=\|\Pi_{N_{F}}\qhat\Pi_{N_{F}}\|=\Order(\sqrt{N_{F}})$,
and a $p$th-order formula needs
$n_{t}=\Order(t^{1+1/p}(\Gamma_{p,N_{F}}/\epsilon_{t})^{1/p})$ steps, with
$\Gamma_{p,N_{F}}$ collecting nested commutator sums weighted by powers of
$\|\qhat\|_{N_{F}}$.  For \cref{eq:trotter1} the leading error is
$\Delta t^{2}\|[H_{1},H_{2}]\|\cdot\|\qhat\|_{N_{F}}$, so
\cref{eq:trottersteps} carries that extra factor.  Larger cutoffs therefore
buy kernel accuracy at the price of Trotter steps: the continuum-mode
counterpart of the range--resolution trade-off of a discretised register,
governed by the moments of the prepared state rather than by a register
range, with the two errors composing linearly into the post-selection
probability \cite[Thm.~4]{DasEtAl2026}.  Higher-order or optimally-ordered
formulas~\cite{Tranter2019} mitigate this and reduce depth.

\paragraph{Recovery and observable estimation}

After \cref{alg:full} the qumode encodes a superposition over $\xi$-space;
an inverse QFT maps $\eta\to\xi$, and projecting onto the shifted causal
region $\xi>\lambda_{\max}(H_{1})T$ of \cref{eq:shifted_recovery}, or
amplitude-amplifying, recovers $|\boldsymbol{\rho}(T)\rangle$.  The
Fokker--Planck semigroup contracts in $L^{1}$ but not in the $L^{2}$ norm
that amplitude encoding uses: $\|e^{At}\|_{2}\le e^{\lambda_{\max}(H_{1})t}$,
and transient $L^{2}$ growth is real for the non-normal generator
($\max_{t\le T}\|e^{At}\|_{2}=1.28$ on the benchmark of
\cref{sec:numerics}).  The branch carrying $\boldsymbol{\rho}(T)$ enters the
post-transform state, after the shifted projection, with amplitude
proportional to
$e^{-\lambda_{\max}T}\|\boldsymbol{\rho}(T)\|/\|\boldsymbol{\rho}(0)\|$, so
the projection succeeds with probability
\begin{equation}
  p_{\mathrm{succ}}
  \;=\;\Omega\!\left(
    e^{-2\lambda_{\max}T}\,
    \frac{\|\boldsymbol{\rho}(T)\|^{2}}{\|\boldsymbol{\rho}(0)\|^{2}}
  \right),\qquad \lambda_{\max}:=\max\bigl(\lambda_{\max}(H_{1}),0\bigr).
  \label{eq:psucc}
\end{equation}
Amplitude amplification~\cite{Brassard2002} raises this to $\Order(1)$ using
$\Order(1/\sqrt{p_{\mathrm{succ}}})$ rounds rather than the
$\Order(1/p_{\mathrm{succ}})$ of naive repetition, an overhead of
\begin{equation}
  \Order\!\left(\frac{1}{\sqrt{p_{\mathrm{succ}}}}\right)
  \;=\;
  \Order\!\left(
    e^{\lambda_{\max}T}\,
    \frac{\|\boldsymbol{\rho}(0)\|}{\|\boldsymbol{\rho}(T)\|}
  \right)
  \label{eq:aa_cost}
\end{equation}
applications of the circuit of \cref{alg:full} and its inverse; it needs
reflections about the initial state and about the $\xi>\lambda_{\max}T$
subspace, both available here since the state preparation is unitary and the
projector is a comparison on the $\xi$-register.  Without the quadratic
saving an exponentially decaying $\|\boldsymbol{\rho}(T)\|$ would put
recovery out of reach.  The bound \eqref{eq:aa_cost} is stated at the
theoretical shift $\lambda_{\max}T$, the smallest admissible recovery point;
the choice \cref{eq:xistar_rule} sits a fixed $\tfrac12$ above it, so the
realised amplitude penalty is $e^{-\xi^{*}}$, lowering the success
probability by $e^{-1}$ and raising the overhead by $e^{1/2}\approx1.65$,
uniformly in $t$.  The factor $e^{\lambda_{\max}T}$ is exponential in the
horizon, and is not an artifact of our construction: for coefficient
matrices whose eigenvalues have differing real parts, An, Liu, Wang, and
Zhao~\cite{AnLiuWangZhao2025} prove that some initial conditions force a
cost exponential in $T$ for \emph{any} quantum ODE solver, and Tennie and
Magri~\cite{TennieMagri2025} state the same consequence for exactly this
operator class.  What \cref{thm:abscissa} adds is the mechanism and the
rate: the exponent is $\lambda_{\max}(H_{1})$, bounded by the compression of
the flow in the interior and by the outward drift at the walls, and by
neither the grid spacing nor the diffusion diagonal $\sigma/\Delta x^{2}$,
so the overhead is $e^{\Order(T)}$ with a constant fixed by the vector field
and the domain, and refining the grid does not worsen it.

\paragraph{Observable estimation}
Physical observables of interest are low-dimensional functionals of the
density: means, variances, and probabilities of regions.  Amplitude encoding
requires care, since measuring a diagonal observable $O$ in
$|\boldsymbol{\rho}(t)\rangle$ returns the \emph{quadratic} functional
$\sum_{j}O_{j}\rho_{j}^{2}/\|\boldsymbol{\rho}\|^{2}$, not the linear
$\sum_{j}O_{j}\rho_{j}$.  The linear functional is obtained as an overlap:
with $|\varphi_{O}\rangle=\sum_{j}O_{j}|j\rangle/\|O\|_{2}$,
\begin{equation}
  \sum_{j}O_{j}\rho_{j}(t)
  \;=\;\|O\|_{2}\,\|\boldsymbol{\rho}(t)\|\,
       \langle\varphi_{O}|\boldsymbol{\rho}(t)\rangle,
  \label{eq:linear_readout}
\end{equation}
and a Hadamard test estimates the overlap to precision $\epsilon_{o}$ with
$\Order(1/\epsilon_{o}^{2})$ repetitions, or $\Order(1/\epsilon_{o})$ with
amplitude estimation, a sample cost
$\Order(\|O\|_{2}^{2}\|\boldsymbol{\rho}\|_{2}^{2}/\epsilon^{2})$ for
$\langle O\rangle$ to additive precision $\epsilon$.  For a \emph{dense}
observable this is expensive, $\|O\|_{2}^{2}=\Theta(N)$ giving
$\Order(N/(k\epsilon^{2}))$ samples for a density supported on $k$ cells;
the low-dimensional observables above are not of that type.  Restricting $O$
to the $\Order(k)$-cell region carrying the density, at a truncation error
controlled by the concentration of $\boldsymbol{\rho}$, gives
$\|O\|_{2}^{2}=\Order(k\|O\|_{\infty}^{2})$ while
$\|\boldsymbol{\rho}\|_{2}^{2}=\Order(1/k)$, hence
$\Order(\|O\|_{\infty}^{2}/\epsilon^{2})$ samples, independent of $N$.  The
bound assumes the $\Order(k)$-cell support is known; locating it is itself
part of the computation, and for a multistable system one does not know a
priori which basin the density settles into, but a coarse low-resolution
pre-pass identifies the occupied cells, after which the refined estimate
proceeds as above.  Reference states $|\varphi_{O}\rangle$ for low-degree
grid polynomials and region indicators are preparable in $\mathrm{poly}(n)$
gates; ratio functionals
$\langle g\rangle_{\rho}=\sum_{j}g_{j}\rho_{j}/\sum_{j}\rho_{j}$ follow from
two such overlaps and are independent of the normalisation; and
$\|\boldsymbol{\rho}(t)\|$ need not be tracked classically, since
$\|\boldsymbol{\rho}(0)\|$ is known from the prepared initial data and the
ratio $\|\boldsymbol{\rho}(T)\|/\|\boldsymbol{\rho}(0)\|$ is estimated from
the post-selection success frequency of \cref{eq:psucc}, or quadratically
faster by amplitude
estimation~\cite{Brassard2002,Crawford2021,Reggio2024}.

\section{Conclusion}\label{sec:conclusion}

We have presented a hybrid qubit--qumode algorithm for the nonlinear ODE
$\dot{\mathbf{x}}=\mathbf{f}(\mathbf{x})$ with polynomial drift of
degree~$L$, returning the trajectory and the low-order statistics of the
flow through the density that carries them, in four stages:
Fokker--Planck linearisation; central-difference discretisation to
$\dot{\boldsymbol{\rho}}=A\boldsymbol{\rho}$ with sparse tridiagonal $A$;
Schr\"{o}dingerisation of the non-unitary semigroup into a parametrised
family of unitary flows; and CV-LCU simulation, in which one qumode encodes
the entire Fourier-mode continuum and
$H_{1}\otimes\qhat+H_{2}\otimes\Id$ is Trotterised into family exponentials
synthesised exactly by monomial-controlled momentum displacements.

The bipartite Pauli structure of a tridiagonal generator, and its partition
into $\Order(\log N)$ commuting families, are due to Arseniev et
al.~\cite{Arseniev2024} (\cref{thm:pauli_structure}).  What a polynomial
drift adds is the content of \cref{thm:chromatic,thm:pauli_count}: the
Hermitian split sends $H_{1}$ into the even-$Y$ families and $H_{2}$ into
the odd-$Y$ ones, the count carries to $2n_{\mathrm{tot}}+d$ in $d$
dimensions, and each family factorises into a degree-$L$ diagonal and a
fixed rank-two bond operator carrying at most $\Order(2^{m}n^{L})$ terms on
$\Order(n^{L})$ prefix monomials.  The factorisation makes each family exponential
an exact product of monomial-controlled displacements
(\cref{sec:synthesis}), giving $\Order(d^{L+1}n^{L+2})$ gates per step and a
total first-order complexity
$\Order\bigl((T^{2}\|A\|^{2}\|\qhat\|_{N_{F}}/\epsilon)d^{L+1}n^{L+2}\bigr)$.
Recovery is performed on the
shifted half-line $\xi>\lambda_{\max}(H_{1})t$ dictated by the numerical
abscissa of the non-normal generator, at an $e^{-\lambda_{\max}T}$ amplitude
cost, and \cref{thm:abscissa} bounds that abscissa uniformly in the grid
spacing.  An exact classical simulation confirms the theorems, the synthesis
identity, the shifted recovery, and the end-to-end accuracy
(\cref{sec:numerics}).

Several directions remain, ordered by how much each would resolve.  The
numerical evidence should be extended in the two ways identified in
\cref{sec:numerics}: a Trotterised end-to-end run, closing the gap between
the accuracy of the mathematics reported here and that of the circuit of
\cref{alg:full} as specified, and a two-dimensional benchmark exhibiting the
per-step compression that \cref{sec:advantage} argues for and no
one-dimensional run can display.  Three theoretical questions follow.  Our
bounds assume an exactly polynomial drift, and propagating the approximation
error of a degree-$L$ approximant through the discretisation and the Walsh
coefficients would extend them to the smooth case.  And since the theorems of \cref{sec:pauli} concern only
the Hermitian split of $A$, they are inherited by any dilation coupling an
ancilla to that split; pairing them with the tight-binding ancilla
of~\cite{Li2026Dilation} is open.  Qumode error models covering photon loss,
thermal noise, and dephasing, with hardware-aware compilation of the
qubit--qumode coupling, are also needed before the counts of
\cref{sec:complexity} become fault-tolerant estimates.

\paragraph{Reducing the dependence on the grid}
The grid size does not enter where the circuit is built: the per-step count
\cref{eq:gatespertrotter} is already polylogarithmic in the grid points, and
the burden falls on the step count, $R\propto T^{2}\|A\|^{2}$ with
$\|A\|=\Theta(d\sigma/\Delta x^{2})$, so $R$ grows like $\Delta x^{-4}$.
Three routes could reduce it.  The bound \cref{eq:trottersteps} uses
$\|[H_{1},H_{2}]\|\le2\|A\|^{2}$, which discards the structure of the
stencil; sharper commutator-scaling bounds~\cite{ChildsSuTran2021} should improve the
scaling of $R$ from $\Theta(\Delta x^{-4})$ to a smaller power of $\Delta x$, since the discretised commutator approximates a
commutator of differential operators.  A $2q$-th order stencil would need only
$\Delta x=\Order(\epsilon^{1/2q})$ rather than $\Order(\sqrt{\epsilon})$, at
the price of a banded rather than tridiagonal operator, whose Pauli structure
is open~\cite{Arseniev2024}.  Neither touches the coupling
$\|A\|=\Theta(1/\epsilon)$, which is a stiffness of the \emph{explicit}
stencil: an implicit or exponential integrator would remove it but does not
sit naturally with Schr\"{o}dingerisation, which lifts the semigroup rather
than a time-stepped approximation of it.

\appendix

\section{Proof of \texorpdfstring{\cref{thm:abscissa}}{Theorem 6.1}}
\label{app:abscissa}

Since $\boldsymbol{\rho}$ is real,
$\boldsymbol{\rho}^{\!\top}\!A\boldsymbol{\rho}
=\boldsymbol{\rho}^{\!\top}\!H_{1}\boldsymbol{\rho}$, so it suffices to
bound the Rayleigh quotient of $A$.  Write
$a_{j}:=A_{j+1,j}=\sigma/\Delta x^{2}+f_{j}/(2\Delta x)$ and
$b_{j}:=A_{j,j+1}=\sigma/\Delta x^{2}-f_{j+1}/(2\Delta x)$, both non-negative
under $\Delta x\le2\sigma/F$.  The master-equation form sets
$A_{jj}=-\sum_{i\neq j}A_{ij}$, so every entry attaches to one bond and
\begin{equation}
  \boldsymbol{\rho}^{\!\top}\!A\boldsymbol{\rho}
  \;=\;\sum_{j=0}^{N-2}\Bigl[(a_{j}+b_{j})\rho_{j}\rho_{j+1}
     -a_{j}\rho_{j}^{2}-b_{j}\rho_{j+1}^{2}\Bigr];
  \label{eq:bondsum}
\end{equation}
with $a_{j}+b_{j}=2\alpha_{j}$, $\alpha_{j}-a_{j}=-\beta_{j}$,
$\alpha_{j}-b_{j}=\beta_{j}$ from \cref{eq:alpha_def,eq:beta_def}, and
$2\alpha_{j}\rho_{j}\rho_{j+1}=-\alpha_{j}(\rho_{j+1}-\rho_{j})^{2}
+\alpha_{j}(\rho_{j}^{2}+\rho_{j+1}^{2})$, this is \cref{eq:quadform}.
Summation by parts on $u_{j}:=\rho_{j}^{2}$, using
$\beta_{j}-\beta_{j-1}=(f_{j+1}-f_{j-1})/(4\Delta x)$, then gives
\begin{equation}
  \boldsymbol{\rho}^{\!\top}\!H_{1}\boldsymbol{\rho}
  \;=\;-D
   \;-\;\sum_{j=1}^{N-2}\frac{f_{j+1}-f_{j-1}}{4\Delta x}\,\rho_{j}^{2}
   \;+\;\beta_{N-2}\rho_{N-1}^{2}\;-\;\beta_{0}\rho_{0}^{2},
  \ \
  D:=\sum_{j=0}^{N-2}\alpha_{j}(\rho_{j+1}-\rho_{j})^{2},
  \label{eq:quadform_expanded}
\end{equation}
whose middle sum is bounded by
$[\tfrac12\max_{x}(-f'(x))+\Order(\Delta x^{2})]\|\boldsymbol{\rho}\|^{2}$
through $(f_{j+1}-f_{j-1})/(4\Delta x)=\tfrac12f'(x_{j})+\Order(\Delta x^{2})$,
which is the interior estimate of \cref{eq:abscissa}; the last two terms
are the wall contributions, positive exactly when the drift points outward,
which is \cref{eq:Fbdry}.

To absorb them, note that for any $1\le K\le N$,
\begin{equation}
  \rho_{0}^{2}\;\le\;\frac{2}{K}\,\|\boldsymbol{\rho}\|^{2}
    \;+\;K\!\!\sum_{j=0}^{K-2}\!(\rho_{j+1}-\rho_{j})^{2},
  \label{eq:trace}
\end{equation}
with the mirror bound for $\rho_{N-1}^{2}$: from
$\rho_{0}=\rho_{j}-\sum_{i<j}(\rho_{i+1}-\rho_{i})$ and Cauchy--Schwarz,
$\rho_{0}^{2}\le2\rho_{j}^{2}+2j\sum_{i\le K-2}(\rho_{i+1}-\rho_{i})^{2}$,
and averaging over $j<K$ with $K^{-1}\sum_{j<K}2j=K-1$ gives
\cref{eq:trace}.  The second hypothesis of \cref{eq:abscissa_hyp} and
$|f_{j}-f_{j+1}|\le\Lambda\Delta x$ give
$\alpha_{j}\ge\sigma/\Delta x^{2}-\Lambda/4\ge\sigma/(2\Delta x^{2})$, hence
$D\ge\frac{\sigma}{2\Delta x^{2}}\sum_{j}(\rho_{j+1}-\rho_{j})^{2}$, while
$-\beta_{0}\le[-f_{0}]_{+}/(2\Delta x)+\Lambda/4\le M$ and
$\beta_{N-2}\le M$.  Applying \cref{eq:trace} at both walls and enlarging
each Dirichlet sum to the full bond set,
\begin{equation}
  \beta_{N-2}\rho_{N-1}^{2}-\beta_{0}\rho_{0}^{2}
  \;\le\;\frac{4M}{K}\,\|\boldsymbol{\rho}\|^{2}
    \;+\;2MK\!\sum_{j=0}^{N-2}\!(\rho_{j+1}-\rho_{j})^{2}.
  \label{eq:wall_absorbed}
\end{equation}
Choosing $K:=\lfloor\sigma/(4M\Delta x^{2})\rfloor$, at least $1$ by the
third hypothesis, gives $2MK\le\sigma/(2\Delta x^{2})$, so the gradient term
is dominated by $D$ and cancels against it in
\cref{eq:quadform_expanded}; and $\lfloor y\rfloor\ge y/2$ for $y\ge1$ gives
$K\ge\sigma/(8M\Delta x^{2})$, whence $4M/K\le32M^{2}\Delta x^{2}/\sigma$.
With the interior estimate this bounds
$\boldsymbol{\rho}^{\!\top}\!H_{1}\boldsymbol{\rho}/\|\boldsymbol{\rho}\|^{2}$
by the right-hand side of \cref{eq:abscissa} for every real
$\boldsymbol{\rho}\neq0$, and the supremum yields \cref{eq:abscissa}.  Since
$M\Delta x=F_{\partial}/2+\Lambda\Delta x/4$ the wall term tends to
$8F_{\partial}^{2}/\sigma$; when $F_{\partial}=0$ it is
$2\Lambda^{2}\Delta x^{2}/\sigma$ and vanishes with the grid, leaving the
interior estimate alone.
\qed

\paragraph{Sharpness, and hypothesis (iii)}
The third hypothesis reads $\Delta x\lesssim\sigma/(2F_{\partial})$: the
boundary layer, of width set by $\sigma$ and the outward drift, must be
resolved.  It is at most four times stronger than the positivity condition,
since $F_{\partial}\le F$.  The bistable grid of \cref{sec:numerics}
satisfies it at $4M\Delta x^{2}/\sigma=0.37$; the logistic comparison run
does not at $n=6$ ($1.70$) and meets it from $n=7$ ($0.82$), so we report
its abscissa as a measurement rather than an instance of the bound.  The
bound is not sharp, and is not meant to be: its role is to establish that
$\lambda_{\max}(H_{1})$ does not track $1/\Delta x$.  Refining $n=5\to10$
bears that out, $0.71\to0.73$ (bistable) and $1.27\to2.18$ (logistic), with
$\alpha(A)=0$ throughout, and the split in \cref{eq:abscissa} predicts which
moves: the interior term is $2.87$ against $1.05$, the wall scale
$F_{\partial}^{2}/\sigma$ is $1.76$ against $12.96$, so the wall is
invisible in the first case and dominant in the second.  That is what
hypothesis (iii) is for --- the logistic layer, of width
$\sigma/F_{\partial}\approx0.03$, spans about one cell at $n=6$, and
resolving it and satisfying (iii) are the same requirement.

\section{Proof of \texorpdfstring{\cref{rem:multidim_pauli}}{Proposition 7.5}}
\label{app:pauli_proofs}

Under \cref{eq:A_decomp} an axis-$k$ bond at carry level $m$ factorises as
$\bigl(\bigotimes_{i\ne k}E^{(N_{i})}_{p_{i}p_{i}}\bigr)\otimes
|p\rangle\langle p|_{\mathrm{pre}}\otimes\Sigma_{m}^{\pm}$.  Both projector
factors are diagonal and expand in $\{I,Z\}$ by \cref{eq:prefix_proj},
giving the bipartite form with $x$-vector $V_{m}$ on register $k$; strings
of the same axis and carry level share it and commute by
\cref{eq:commute_criterion}, and distinct axes are disjoint, so the counts
add to $\sum_{k}(2n_{k}+1)$.  For the sparsity, the prefix diagonal
$\alpha_{j(\mathbf{p}_{\perp},p,m)}$ has total degree $\le L$ in the full
multi-index by the joint hypothesis of \cref{sec:setup}, and the grid is
uniform along every axis, so the multi-index coordinate is affine in the
bits of all $n_{\mathrm{tot}}-m$ non-suffix qubits and \cref{lem:affine}
applies there; its Walsh support therefore has weight $\le L$, which gives
the term count.
\qed

\subsection*{Why the grid must be uniform}

The affinity of \cref{lem:affine} is specific to a \emph{uniform} grid: on a
non-uniform or adaptive grid $x_{j}$ is no longer affine in the bits of $j$,
a degree-$L$ drift is no longer multilinear of degree $L$ in them, and the
Hamming-weight bound is lost.  Uniformity is therefore a standing hypothesis
of \cref{thm:pauli_count} and of everything downstream of it.

\section*{Data availability}
The classical simulations reported in \cref{sec:numerics},
including the Pauli decompositions, the product-synthesis check, and the
figures, are reproduced by the code archived at \url{https://github.com/k-chandramouli/qsim-nonlinear-dynamics}. 

\section*{Acknowledgments}
KC acknowledges fruitful discussions with Elin Ranjan Das.
This work is supported by the U.S.\ Department of Energy, Office of Science,
Advanced Scientific Computing Research, under contract number DE-SC0025384.
The authors acknowledge that AI tools, specifically Claude Opus 4.8 and 5.0, were used to help with writing code,
formatting the paper, and performing literature surveys. All theorems,
proofs, numerical results, and references were independently checked by the
authors, who assume responsibility for all content.

\bibliographystyle{siamplain}
\bibliography{references_siam}

\end{document}